\DeclareMathOperator*{\argmax}{arg\,max}
\DeclareMathOperator*{\argmin}{arg\,min}
\DeclareMathOperator{\sgn}{sgn}
\newcommand{\set}[1]{\mathcal{#1}}
\newtheorem{theorem}{\bf Theorem} % [section]
\begin{document}
\title{Non-binary dynamical Ising machines for combinatorial optimization}

\author*{\fnm{Aditya} \sur{Shukla}}\email{aditshuk@umich.edu}
\author*{\fnm{Mikhail} \sur{Erementchouk}}\email{merement@gmail.com}
\author{\fnm{Pinaki} \sur{Mazumder}}\email{pinakimazum@gmail.com}

\affil{\orgdiv{Electrical Engineering and Computer Science Department}, \orgname{University of Michigan}, \orgaddress{\city{Ann Arbor}, \postcode{48109}, \state{MI}, \country{U.S.A}}}

\abstract{Dynamical Ising machines achieve accelerated solving of complex combinatorial optimization problems by remapping the convergence to the ground state of the classical spin networks to the evolution of specially constructed continuous dynamical systems. The main adapted principle of constructing such systems is based on requiring that, on the one hand, the system converges to a binary state and, on the other hand, the system's energy in such states mimics the classical Ising Hamiltonian.
% integrate the well-established idea of the discrete Ising spin model with field of continuous dynamical systems. While the former provide universal model for combinatorial problems, the latter allows design flexibility in the dynamical ground state search of the Ising model.
%
The emergence of binary-like states is regarded to be an indispensable feature of dynamical Ising machines 
%because otherwise it is challenging 
as it establishes the relation between the machine's continuous terminal state and the inherently discrete solution of a combinatorial optimization problem. This is emphasized by problems where the unknown quantities are represented by spin complexes, for example, the graph coloring problem. In such cases, an imprecise mapping of the continuous states to spin configurations may lead to invalid solutions requiring intensive post-processing. In contrast to such an approach, we show that there exists a class of non-binary dynamical Ising machines without the incongruity between the continuous character of the machine's states and the discreteness of the spin states. 
We demonstrate this feature by applying such a machine
to the problems of finding proper graph coloring, constructing Latin squares, and solving Sudoku puzzles.
%This merger clearly leads to an incompatibility of variables' types, arising from the continuity of intermediate states of the dynamics while simultaneously expecting a strict binarity from its terminal states. Some of state-of-art Ising machines resolve this by introducing bistability or similar phenomena which affects their quality of solution.
%
%We propose to resolve this incongruity without the imposition of bistability, via a novel hybrid-yet-simple dynamics of both solution-delivering spins and flexible continuous variables.
%We also show that it can be scaled up for application to practical combinatorics by employing it in the  construction and completion of Latin squares. 
%
%Proposed dynamical heuristic is useful 
Thus, we demonstrate that the information characterizing discrete states can be unambiguously presented in essentially continuous dynamical systems. This opens new opportunities in the realization of scalable electronic accelerators of combinatorial optimization.
}

\keywords{combinatorial optimization, graph coloring, Ising machines, quadratic unconstrained binary optimization, Latin squares}

%The maketitle should come here
\maketitle

%\linenumbers

\section{Introduction} 
\label{sec:intro}

The ever-growing demand for solving complex computational problems compels researchers to explore alternative models of computation based on unconventional principles. Recently, the approach employing classical spin systems started attracting explosive interest. The computational capabilities of such systems were a focus of investigations for a long time owing to the relation between the spin states delivering the lowest energy of the Ising model ~\cite{kirkpatrickOptimization1983, hopfieldNeurons1984, cernyThermodynamical1985, fuApplication1986}, and a broad class of combinatorial optimization problems~\cite{barahonaComputational1982, Lucas2014}.  The novel perspective that motivates the special interest is the realization of the computational capabilities of classical spin systems in continuous dynamical systems, 
which led to the recognition of a particular class of computing devices, the dynamical Ising machines~\cite{Mohseni2022IsingProblems, Bybee2023EfficientMachines, Si2024Energy-efficientProblems}. While based on different underlying dynamical principles, from degenerate optical parametric oscillators~\cite{yamamotoCoherent2017} to bistable dynamics~\cite{Zhang2022AMachine-BRIM}, the dynamical Ising machines share a common operational principle. They employ a characteristic feature of selected continuous dynamical systems to converge to binary or close-to-binary states~\cite{bohmOrderofmagnitude2021}. Ensuring that the energy of such states reproduces the energy of the classical Ising model and that the energy decreases with the machine evolution establishes the connection between the progression of the dynamical Ising machines and finding the ground state of a classical spin system. 

A strong binary structure may appear imperative for the terminal states of the Ising machine. Indeed, substantial deviations from such a structure introduce an uncertainty in the correspondence between the machine's terminal state and the required binary state.
%which can be related to ~\cite{laurentPositive1995}. 
In some situations, this uncertainty does not constitute a fundamental challenge. When any binary state represents a solution to an optimization problem, a ``suboptimal" mapping of an unstructured terminal state to a binary state only impacts the quality of the solution. From a practical perspective, the quality of the solution is an important metric, but it is only one of the metrics characterizing an approach to solving an optimization problem. If other metrics, such as time complexity, scalability, and parallelizability, are favorable, the approach may still be highly beneficial. However, in some situations, only selected binary states may represent a solution. For example, while solving a coloring (labeling) problem, colors are represented by spin complexes. Not all states of those complexes may correspond to a valid color. From the perspective of the original problem, this means that the ``suboptimal" mapping may produce no solution at all, which raises doubt about whether the approach is a viable option for solving high-level optimization problems. It must be noted in this regard that the convergence to a binary state does not eliminate the problem of unfeasible solutions.

In this paper, we present a dynamical Ising machine that goes against the common wisdom. The machine does not necessarily converge to a binary state and, therefore, is non-binary. Yet, the machine's terminal states trivially produce binary states and do not produce unfeasible solutions for labeling problems. We demonstrate these features of the presented Ising machine for the example of finding Latin squares and solving the Sudoku puzzles, the problems that can be straightforwardly represented as coloring problems.
Besides opening a new direction in developing dynamical Ising machines, the presented machine gives a nontrivial example of how binary information can be carried by continuous dynamical systems.

\section{$V_2$ model}
\label{sec:v2-model}

Most modern dynamical Ising machines traverse a continuous spin-space as they dynamically evolve in such a way that they only settle in states where all spins are binary or binary with some marginal error \cite{Marandi2014, Leleu2017, Goto2019, Shukla2023CustomHeuristic}. 
%
%In nearly all novel dynamical Ising machines with continuous spins, the dynamics is engineered in such a way that the terminal state is very close to a being binary vector. 
For example, in~\cite{Zhang2022AMachine-BRIM, Wang2019NewMachines}, the convergence to a binary state is enforced by imposing bi-stability on the spins. 
%While the primary claimed advantage of such continuous spin-dynamics lies in its ability to be parallelized \cite{Goto2019}, 
The design choice for binary terminal states is justified by the equivalence of the designed machine's Hamiltonian in such states and the Ising Hamiltonian. At the same time, 
%While in the oscillatory networks, 
it is now known that a strong bi-stability can severely affect the Ising machines' performance~\cite{Erementchouk2021, Bashar23Stabilityequal}, 
which makes unclear the practical quality of solutions produced by the Ising machines

%for other analog spin machines such as the bifurcation machine \cite{Goto2019} it is not clear what the theoretical limits of the performance are. 
%and to what degree the binarization affect the overall quality of the solution.

This work aims to revisit the notion of whether continuous spins, as
internal variables of Ising machines, need to converge to binary values at
all, and, consequently, to identify dynamical continuous-spin systems that
can solve inherently discrete problems without converging to binary states.

The problem native to the Ising machines is the max-cut problem: finding
partitioning a graph with the maximum total weight of edges between the
parts.
%in \cite{Erementchouk2023Self-containedMachines}, as most Ising machines reported so far have primarily targeted this problem.
% This problem involves partitioning a graph so that the sum of the weights of edges from the nodes in one partition to the nodes of the other is maximum. 
For a given connected graph $\mathcal{G} = (\set{V}, \set{E})$ with
$N = \abs{\set{V}}$ nodes and weighted adjacency matrix $\widehat{A}$, the
partition is described by a binary function (spin configuration)
$\boldsymbol{\sigma} : \set{V} \to \left\{ -1, 1 \right\}$, or, in other words, by
assigning to each graph node $i \in \set{V}$ a binary variable
$\sigma_i \in \left\{ -1, 1 \right\} $. Then the max-cut problem can be expressed
as % the quadratic program
\begin{equation}
    \max_{{\boldsymbol\sigma}} {C(\boldsymbol{\sigma})}\\
    =\max_{{\boldsymbol\sigma}} \left({\frac{1}{4} \sum_{i, j=1}^N A_{i,j}\left(1-\sigma_i \sigma_j\right)}\right),
    \label{eq:maxcut-formal}
\end{equation}
where $C(\boldsymbol{\sigma})$ is discrete cut function defined over a partition
of the vertices of $\mathcal{G}$. The relation with the classical Ising model is
established by presenting
$C(\boldsymbol{\sigma}) = W / 2 - H(\boldsymbol{\sigma}) / 2$, where
$W = \sum_{i,j} A_{i,j}/2$ is the total weight of the graph edges and
$H(\boldsymbol{\sigma}) = \boldsymbol{\sigma}^T \widehat{A} \boldsymbol{\sigma}/2 =
\sum_{i,j} A_{i,j} \sigma_i \sigma_j / 2$ has the meaning of the Hamiltonian of the
antiferromagnetic Ising model on graph $\mathcal{G}$.

%
% The max-cut problem is NP-hard \cite{Karp1972ReducibilityProblems}, meaning that as the number of variables in the problem increases, the time needed by a generic algorithm to exactly solve the problem grows exponentially. 
%For instance, for the above mentioned max-cut problem, the naive exhaustive search would require scanning over ${2^{N-1}}$ possible partitions of $\mathcal{G}$.
%While a problem with $10^3$ nodes may take less than second to solve, a problem with $10^6$ nodes will require computational time physically impossible to achieve.
%
%
% Consequently, any practically large NP-hard problem is solved under loose
% accuracy requirements using specially crafted approximation algorithms
% \cite{Hochbaum1997ApproximationProblems,Vazirani2001ApproximationAlgorithms}.
% For the max-cut problem, a well-known approximation algorithm by Goemans
% and Williamson \cite{Williamson1994} relaxes an $N$-node quadratic binary
% program to a semi-definite program (SDP) of $N$-dimensional vectors. It
% must be emphasized that not all approximation algorithms are practical for
% large-scale implementations. Though SDP methods provide better
% approximation guarantee over generic local search methods, these are
% computationally more intensive than the latter~\cite{Burer2002}.

%Inspired by the relaxation of the max-cut problem proposed by Williamson et al. \cite{Williamson1994} and the parallelizing potential of the analog spin-based dynamical Ising machines, 
%To achieve the objectives outlined in Sec. \ref{sec:intro},
To identify the dynamics governing the non-binary Ising machines, we
reformulate the max-cut problem in the spirit of the rank-2 SDP
relaxation~\cite{Williamson1994, Burer2002}: in terms of a continuous
system of variables
$\boldsymbol{\xi} = \left\{ \xi_1, \ldots, \xi_N \right\} $, and a respective
relaxed continuous cut function $C_R(\boldsymbol{\xi}) : \mathbb{R}^N \to \mathbb{R}$. The connection with the relaxed problem proposed in this work is established by first formulating
the partition delivering the max-cut as
\begin{equation}\label{eq:maxcut-intermed}
      \boldsymbol{\sigma}'
     =\argmax_{{\boldsymbol\sigma}} \left({\frac{1}{2} \sum_{i,j}A_{i,j}\Phi_I(\sigma_i  - \sigma_j)}\right),
\end{equation}
where $\Phi_I(x) = x^2/4$ is the indicator function for the edge between the
nodes $i$ and $j$ to be cut such that $\Phi_I(0)=0$ and $\Phi_I(\pm2)=1$. Letting
the spins (and consequently the cut) to assume continuous values leads to
the following class of relaxations of the max-cut problem for $\mathcal{G}$:
\begin{equation}
 \boldsymbol{\xi}' = \argmax_{\boldsymbol{\xi}}{C_R(\boldsymbol{\xi})} 
 = \argmax_{\boldsymbol{\xi}} {\sum_{i,j}A_{i,j}\Phi_R \left(\xi_i-\xi_j\right)}.
\label{eq:relaxed-maxcut}
\end{equation}
% where $C_R(\boldsymbol{\xi}) : \mathbb{R}^N \to \mathbb{R}$ is the relaxed continuous cut
% function, $\boldsymbol{\xi}\in\mathbb{R}^N$ is the relaxed state-vector. 
The relaxed
cut-counting function $\Phi_R(\xi)$ is a periodic function with period $4$
constrained by $\Phi_R(0)=0$ and $\Phi_R(\pm2)=1$. A few
examples of such functions are shown in Fig.~\ref{fig:Phi-plots}. We note
that, in addition to the period translations $\xi_i \to \xi_i + 4k_i$, the
relaxed cut function $C_R(\boldsymbol{\xi})$ is invariant with respect to
homogeneous translations $\xi_i \to \xi_i + \Delta\xi$.

% \begin{figure}
%     \centering
%     \includegraphics[width=0.9\linewidth]{}
%     \caption{Adopted and alternative (a) $\Phi$ and (b) $\phi$. Respective
%     names are based on the shape of $\phi$.}
%     \label{fig:Phi-plots}
%   \end{figure}

\begin{figure}
    \centering
    \includegraphics[width=0.4\linewidth]{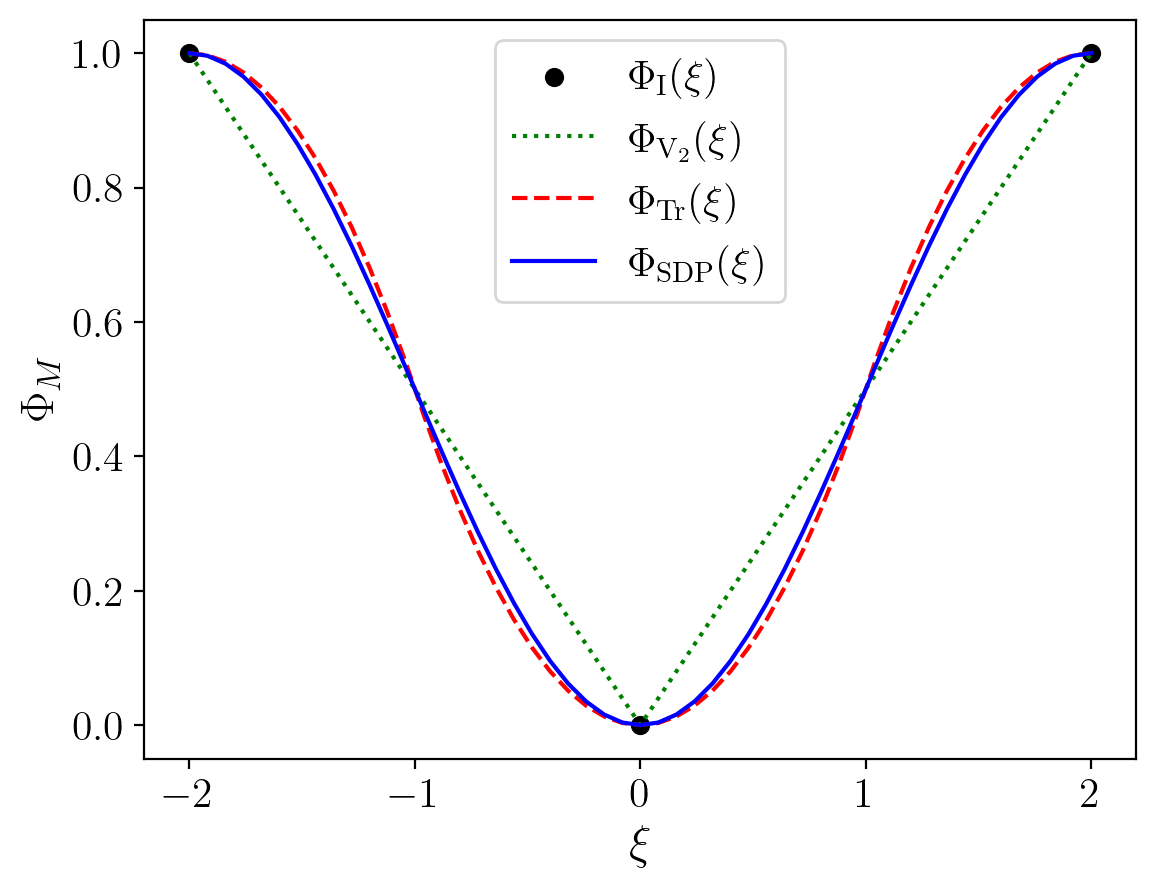}
    \caption{Examples of cut-counting functions: $\Phi_{\mathrm{I}}$ is the discrete function of the binary Ising model,
      $\Phi_{R}(\boldsymbol{\xi})$ are relaxed functions with
      $M = \mathrm{SDP}, \mathrm{Tr}, \mathrm{V}_2$ corresponding to
      rank-$2$ SDP relaxation~\cite{Burer2002}, the triangular model from
      Refs.~\cite{Shukla2024ScalableMachines, Shukla2023CustomHeuristic},
      and the $\mathrm{V}_2$ model considered in the present paper. For the
      relaxed cut-counting functions, only one period is shown.}
    \label{fig:Phi-plots}
\end{figure}

The solution to the relaxed problem \eqref{eq:relaxed-maxcut}, $\boldsymbol{\xi}'$, is not necessarily
binary. For example, % the program in 
Eq.~\eqref{eq:relaxed-maxcut} with
$\Phi_R(\xi) = \Phi_{\mathrm{SDP}}(\xi) = \left( 1 - \cos(\pi \xi/2)\right)/2$ is
equivalent to a rank-2 SDP, which outputs a binary vector only for a
special set of graphs e.g. bipartite graphs~\cite{Burer2002}. For this
reason, a rounding procedure is required to map $\boldsymbol{\xi}'$ to a
strictly binary vector. A straightforward approach is a parametric rounding
procedure~\cite{Punnen2022, Burer2002, Shukla2024ScalableMachines}, which
compares with $0$ each $\xi_i' - r$, where $r$ is the rounding center,
taken by modulo of the period of $\Phi_R(\xi)$. This defines rounding as the mapping
$\boldsymbol{\xi}'\mapsto \boldsymbol{\sigma}(r)$. Generally, this procedure produces a
spectrum of rounded configurations $\boldsymbol{\sigma}(r)$ depending on the
value of the rounding center and the respective spectrum of cuts
$C(\boldsymbol{\sigma}(r))$. Hence, a methodology to find the optimal solution
to the rounding problem must accompany solving Eq.~\eqref{eq:relaxed-maxcut}.

% In a departure from the status quo within Ising machines, we forego the
% requirement of the dynamical variables to settle in binary states. Taking
% into account the periodicity of the relaxed cut-counting function, it is
% instead convenient to represent $\xi$ as composed The choice of the relaxed
% problem, specifically so that $ \Phi(x+2)=1 - \Phi(x)$, allows us to always
% maintain a rounded solution as the dynamics proceed. Each element $\xi$ can
% be shown to be composed
% To emphasize the role of the binary form of the required solution while
% incorporating the symmetries of the relaxed cut-counting function, it is \note{we found??}
% convenient to
Next, we introduce an important relaxed spin representation~\cite{Erementchouk2023Self-containedMachines}, which 
partitions dynamical variables $\xi_i$ into two components: spin
$\sigma_i \in \pm1$
and a continuous remainder $X_i\in[-1,1)$ according to
\begin{equation}\label{eq:sigma_X-representation}
    \xi_i  = \sigma_i + X_i +  4k_i,
\end{equation}
where $k_i$ is an integer, and $4$ represents the period of $\Phi_R(\xi)$.
%, and if $\sigma \in \pm1$ then $X\in[-1,1)$.
Thus, the vector $\boldsymbol{\xi} = \{\xi_i\}$ is replaced by the vector of
pairs $(\boldsymbol{\sigma}, \mathbf{X}) = \{\left(\sigma_i,X_i \right)\}$.
% We then employ the 2D system of variables $\{\left(\sigma_i,X_i \right)\}$ in
% place of $\{\xi_i\}$ and the pair $\left(\sigma_i, X_i\right)$ is related to
% $\xi_i$ as

Assuming that the relaxed cut-counting function possesses the gliding
symmetry, $\Phi_R(\xi + 2) = 1 - \Phi_R(\xi)$, as all functions shown in
Fig.~\ref{fig:Phi-plots} do,
%Based on the gradient descent of $C_R$ in the $\boldsymbol{\sigma}\times\boldsymbol{X}$ space, 
the relaxed cut function can be written
as~\cite{Erementchouk2023Self-containedMachines}
\begin{equation}\label{eq:relaxed-cut-function}
  C_R(\boldsymbol{\xi}) = C_R(\boldsymbol{\sigma}, \mathbf{X}) = C(\boldsymbol{\sigma})
  + \widetilde{C}_R(\boldsymbol{\sigma}, \mathbf{X}),
\end{equation}
where
\begin{equation}\label{eq:relaxed-cut-add}
  \widetilde{C}_R(\boldsymbol{\sigma}, \mathbf{X}) =
  \frac{1}{4} \sum_{i, j} A_{i,j} \sigma_i \sigma_j \Phi_R(X_i - X_j).
\end{equation}
Respectively, the equations of motion governing the $V_2$ Ising machine,
$\dot{\mathbf{X}} = \nabla_{\mathbf{X}} C_R(\boldsymbol{\sigma}, \mathbf{X})$, have
the form
\begin{equation}\label{eq:relaxed_X_eqm}
    \dot{X}_i = \frac{1}{2} \sum_j A_{i,j}\sigma_i \sigma_j \phi_R(X_i-X_j),
\end{equation}
where $\phi_R(X) = d\Phi_R(X) /dX$. At $X=\pm1$, the evolution of $X$ itself is discontinuous, as
illustrated by Fig.~\ref{fig:sigXspace}.  The evolution of  $\sigma$ stems from
these discontinuous transitions of $X$: whenever $X_m$
crosses the $X = \pm1$ boundary, $\sigma$ changes sign.

\begin{figure}
  \centering
  \includegraphics[width=0.7\linewidth]{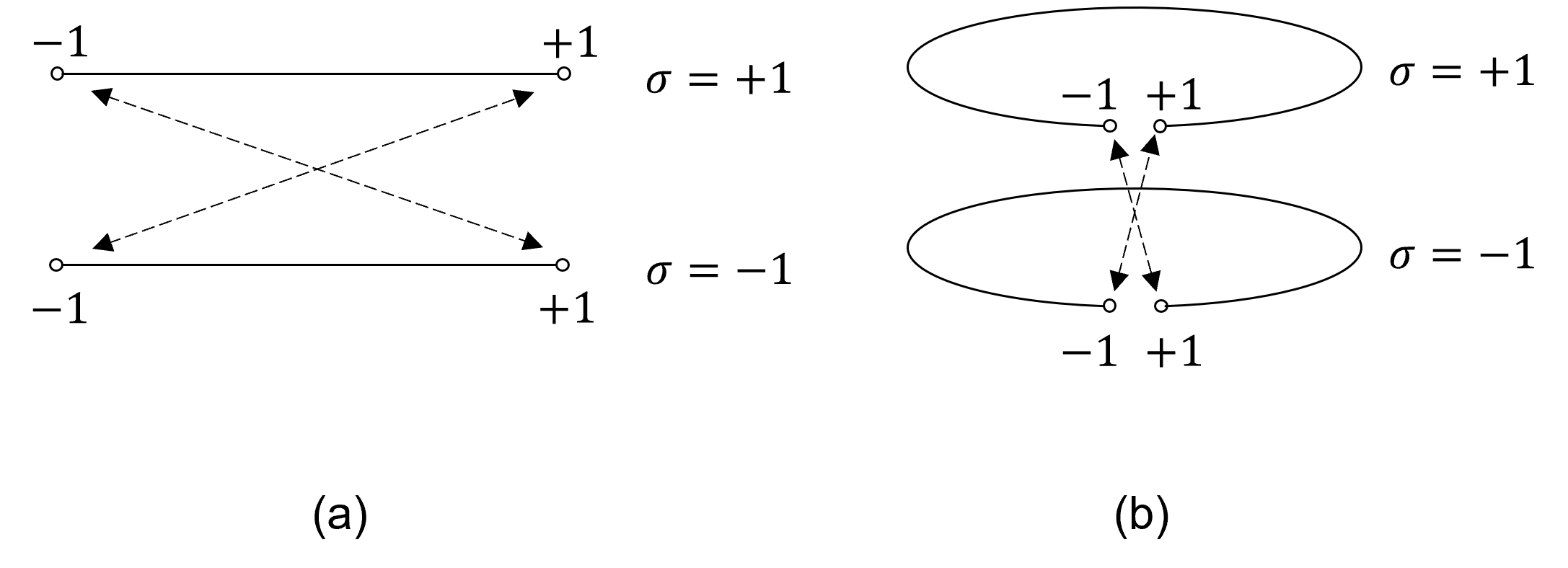}
  \caption{The phase space of the dynamical variables in the $V_2$ model
    is a wedge sum of two circles with a circumference of $2$. (a) Range of an
    state-vector element $(\sigma, X)$, with $X$ depicted as an open interval.
    (b) $(\sigma, X)$ with $X$ shown as a circle to the visualize the
    continuity of the transition between states $(\sigma,\pm1)$ and
    $(-\sigma,\mp1)$.}
  \label{fig:sigXspace}
\end{figure}

% \note{Rounding as emerging from the translational invariance of the cut
%   function} It must be noted that, f
From the rounding perspective,
$\boldsymbol{\sigma}$ entering representation~\eqref{eq:sigma_X-representation}
can be regarded as rounding of $\boldsymbol{\xi}$ with respect to a particular
choice of the rounding center. % Following from the invariance of the relaxed
% cut function $C_R(\boldsymbol{\xi})$ with respect to homogeneous
% translations, the parametric rounding is defined as the solution of
% equations $\xi_i - r = \sigma_i(r) + X_i(r) + 4k_i(r)$
% \cite{Erementchouk2023Self-containedMachines}.
Thus, generally,
finding the terminal states of the dynamics governed by
Eq.~\eqref{eq:relaxed_X_eqm} should be followed by solving the optimal
rounding problem. However, in~\cite{Erementchouk2023Self-containedMachines}, it was
found that this problem is eliminated for the model driven by
\begin{equation}\label{eq:phi_V-sgn}
  \phi_V(X) = \frac{1}{2}\sgn(X),
\end{equation}
where $\sgn(X)$ is the sign function. %:  with the adopted convention $\sgn(0)=0$.
% \begin{equation}
%     \phi_{\mathrm{V}}(\xi) = \begin{cases}
%     -1, &-1\leq\xi<0\\
%     0, &\xi=0 \\
%     1, &0<\xi<1
%     \end{cases}.
% \end{equation}
Owing to the shape of the plot $\Phi_{\mathrm{V}}(X) = \abs{X} /2$, the dynamics described
by $\phi_{\mathrm{V}}(\xi)$ is dubbed the $V_2$ model, and the Ising
machine based on the dynamics is called the $V_2$-machine. The $V_2$ model
possesses several key properties, which collectively summarize its behavior
as a self-contained dynamical Ising machine. Their proofs and some other
general properties of the $V_2$ model are provided
in~\cite{Erementchouk2023Self-containedMachines}.

First, as the system progresses from an arbitrary initial state, relaxed cut $C_R$ approaches the canonical discrete cut $C$ at the steady state. 
It can be shown that at any point of time,
\begin{equation}
  C_V(\boldsymbol{\sigma},\boldsymbol{X}) = C(\boldsymbol{\sigma}) +
  \frac{1}{4}\sum_{i,j}A_{i,j}\sigma_i\sigma_j \abs{X_i-X_j},
\end{equation}
and that at the steady state, $C_R(\boldsymbol{\sigma},\boldsymbol{X})$ depends on $\boldsymbol{\sigma}$ only while any dependence on $\boldsymbol{X}$ vanishes. 

Second, starting from an arbitrary initial state, the $V_2$-machine evolves so that the cut does not decrease with time.
As time progresses, the cut represented by the binary component $\boldsymbol{\sigma}$ does not decrease:
\begin{equation}
    C\left(\boldsymbol{\sigma}(t_2)\right) \geq C\left(\boldsymbol{\sigma}(t_1)\right)
\end{equation}
for $t_2>t_1$.
Equivalently, any evolution following weak perturbations from a steady state does not decrease discrete cut $C$.
%The validity of this statement can be derived from P2.1.
% Consider the steady state $\left(\sigma_1,X_1\right)$. If the machine is partially perturbed to $(\sigma_1,X_2 )$ and allowed to evolve, then for the new steady state $(\sigma_2,X_3)$, we have
% \begin{equation}
% C(\sigma_2 )\geq C(\sigma_1 ).    
% \end{equation}
% This property proves useful in breaking the system free from non-optimal stable states.

Finally, it was shown in~\cite{Erementchouk2023Self-containedMachines} that
the terminal state of the $V_2$-machine does not require finding the
optimal rounding. While solving the rounding problem,
$\xi_i - r = \sigma_i(r) + X_i(r) + 4k_i(r)$, may result in different binary
configurations $\boldsymbol{\sigma}(r)$ depending on the rounding center, the
cut produced by these configurations has the same weight. Combining with
the two properties above, this implies that starting from an arbitrary
state $\boldsymbol{\xi}(0)$, the $V_2$-machine ends up in such state
$(\boldsymbol{\sigma}, \boldsymbol{X})$ that $\boldsymbol{\sigma}$ produces a cut not
worse than that obtained by the optimal rounding of $\boldsymbol{\xi}(0)$.

These properties yield dynamics with an ever-accessible solution to the
max-cut problem, whose quality may be further improved with slight but
regular perturbation. It must be emphasized that these properties are specific
for the $V_2$ model. For example, for other relaxations shown in
Fig.~\ref{fig:Phi-plots}, $\Phi_{\mathrm{SDP}}$ and $\Phi_{\mathrm{Tr}}$, the
respective discrete component $\boldsymbol{\sigma}$ does not necessarily hold
the best binary state and optimal rounding requires solving a disparate
optimization problem~\cite{Burer2002, Shukla2024ScalableMachines}.

\section{Graph coloring and stable definite color states}

The max-cut problem has the benefit of a relatively straightforward
connection with the dynamical models governing Ising machines. In other
words, it can be regarded as native to Ising machines. Consequently,
applying Ising machines to solving high-level problems requires
representing the original problem in terms of graph partitioning. From the
theoretical perspective, the sole existence of such representation is
warranted by the NP-completeness of the max-cut problem as the decision
problem about the existence of cut with the given
weight~\cite{Garey1990ComputersNP-Completeness}. However, this does not
address the practical concerns about finding such representation and making
it efficient.
% Expanding the classes of combinatorial optimization problems that fall
% under the scope of Ising machines is a general approach to resolving
% these difficulties.

One of the origins of the arising difficulties is the very %sole 
mapping of
high-level variables of the original problem to binary spins within
the framework of interacting spin networks. In this case, the original
variables are represented by several spins (spin complexes), and, in
addition to providing good quality solutions, it is imperative that
individual spin complexes converge to states corresponding to valid values
of the high-level variables. However, \emph{a priori}, it is not evident
that a continuous-spin network governed by particular equations of motion
will converge to such states starting from generic initial conditions. For the example of the graph coloring problem considered in detail below, this is illustrated by Fig.~\ref{fig:convergence}, which shows the probability
of convergence of several Ising machines to a state describing a valid
color for a graph with a single node. For each machine, the probability is
estimated by restarting the machine $5000$ times and counting the number of
instances when the machine terminated at a state with a definite color. For
the triangular model~\cite{Shukla2024ScalableMachines},
Fig.~\ref{fig:convergence} shows the outcome of optimally rounding the
nonbinary terminal state. The coherent Ising machine (CIM)~\cite{yamamotoCoherent2017} was simulated
using package
\verb|cim_optimizer|~\cite{Chen_cim-optimizer_a_simulator_2022}. The
performance of CIM depends on the choice of hyperparameters. We evaluated
the sets of values of hyperparameters provided in the package documentation
and found that, for the considered problem, the default choice yields the
best results, which are shown in Fig.~\ref{fig:convergence}.

\begin{figure}[tb]
  \centering
  \includegraphics[width=3in]{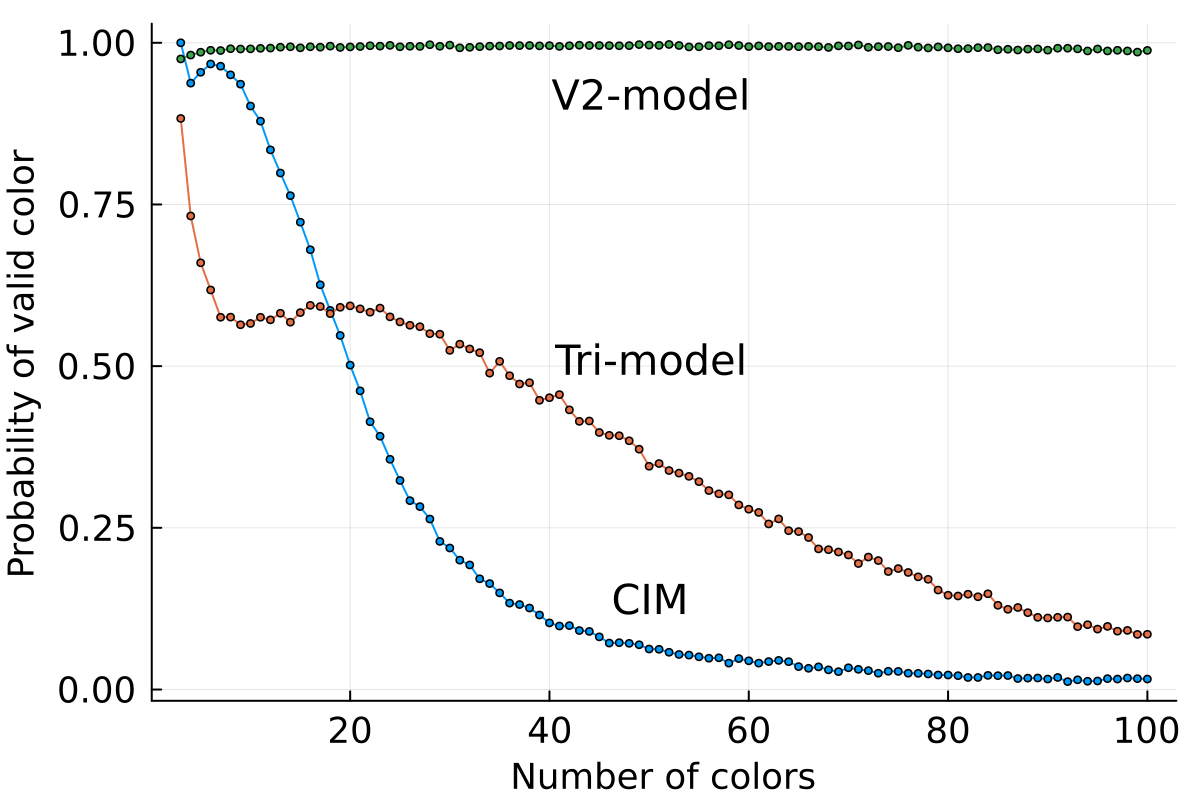}
  \caption{The dependence of the probability of convergence to a definite
    color state for a single node graph on the number of colors. Three
    machines are shown: the coherent Ising machine (CIM)~\cite{yamamotoCoherent2017} and based on the
    triangular~\cite{Shukla2024ScalableMachines} and $V_2$ models.}
  \label{fig:convergence}
\end{figure}

\subsection{Graph coloring}

A proper (node) graph coloring is an assignment of colors to the graph
nodes without adjacent nodes bearing the same color. The minimum number of
colors needed for proper coloring of graph $\mathcal{G}$ is called the graph
chromatic number, $\chi_{\mathcal{G}}$. Finding the chromatic number of a graph from a
sufficiently rich family of graphs is an NP-hard
problem~\cite{Karp1972ReducibilityProblems}.
Graph coloring may model a limited resource allocation problem, where
colors represent the shared resource, nodes play the role of the resource user, and edges denote the potential conflict arising from the simultaneous
use of the assigned resource~\cite{Hale1980FrequencyApplications,
  Boev2023Quantum-inspiredAssignment}. Real-world applications that can be
naturally formulated in terms of the graph coloring problem include
job-scheduling on machines where a set of jobs (vertices) must be scheduled
over a minimum number of time-slots (colors) while excluding the
simultaneous use of time-slots (edges)~\cite{Leighton1979AProblems}.
Another example is the problem of channel allocation in dense
wireless networks~\cite{Wu2022Coloring-BasedApproach}. In this case, nodes
stand for individual devices, colors are associated with the channels, and
edges represent the interference due to sharing the same channel.

To apply Ising machines, the $K$-coloring problem, when the graph
$\mathcal{G}$ is colored with no more than $K$ colors, must be formulated as the
max-cut problem~\cite{Lucas2014,Glover2019QuantumModels}. A straightforward
way to obtain such a formulation is to start by representing the graph
coloring by a set of binary variables $s_{i,\kappa}\in\{0,1\}$, where
$i=1,2, \ldots , N$ enumerate the graph nodes and $\kappa=1,2, \ldots , K$ correspond to
colors. Consequently, $s_{i,\kappa}$ serve as state variables:
$s_{i,\kappa} = 1$ if the $i$-th node is assigned color $\kappa$, and
$s_{i,\kappa} = 0$ otherwise. Thus, for given node $i$, the collection of binary
variables $s_{i,\kappa}$ represents a one-hot encoding of the assigned color.

The cost function characterizing the coloring comprises two components. The
first component aims to make the coloring proper: adjacent vertices must
have different colors. This condition is expressed as
$A_{i,j} \sum_\kappa s_{i,\kappa} s_{j,\kappa} = 0$ for all edges
$(i,j) \in E$. In the optimization context, this amounts to penalizing
configurations that do not meet the requirement of proper coloring by
$H^{(1)}(\boldsymbol{s}) = \sum_{i,j} A_{i,j} \sum_\kappa s_{i,\kappa} s_{j,\kappa}/2$.

Next, for each node $i$, only one of the $K$ binary variables $s_{i,\kappa}$ may
take the $+1$ value, which leads to the one-hot encoding constraint
$\sum_{\kappa=1}^K s_{i,\kappa} = 1$. This constraint can be incorporated by using the
Lagrangian relaxation when configurations violating the constraint are
assigned a penalty
$H^{(2)}(\boldsymbol{s}) = \sum_i \left(\sum_{\kappa=1}^K s_{i,\kappa} -1\right)^2$ incorporated
into the cost function with the respective Lagrange multiplier $\lambda$. Thus,
$K$-coloring is found as found by solving an optimization problem: finding
the minimum of the cost function
$H(\boldsymbol{s}) = H^{(1)}(\boldsymbol{s}) + \lambda H^{(2)}(\boldsymbol{s})$.
%\begin{equation}
% $\boldsymbol{s}' = \argmin_{\boldsymbol{s}}\left( H_1(\boldsymbol{s})
%   + \lambda H_2(\boldsymbol{s}) \right)$.
% \label{eq:color_qubo}
% \end{equation}

To solve the $K$-coloring problem using the $V_2$-machine, the problem
is reformulated in terms of the native to the model spin
variables $\sigma\in\{\pm 1\}$, using the substitution
$s_{i\kappa} = (\sigma_{i\kappa} + 1)/2$. This yields the penalty associated with
colliding colors
\begin{equation}
\label{eq:cost-comp-1}
H^{(1)}(\boldsymbol{\sigma}) = \frac{1}{2} \sum_{i,j} A_{i,j}
       \left(\sum_\kappa \sigma_{i,\kappa} \sigma_{j,\kappa} + 2\sum_\kappa \sigma_{i,\kappa}\sigma_{0} \right),
\end{equation}
and the one-hot encoding constraint, $\sum_\kappa \sigma_{i,\kappa} = 2 - K$,
\begin{equation}
\label{eq:cost-comp-2}
H^{(2)}(\boldsymbol{\sigma}) = \sum_{i, \kappa, \gamma} \sigma_{i,\kappa} \sigma_{i,\gamma} +
2(K - 2) \sum_{i, \kappa} \sigma_{i,\kappa}\sigma_{0}.
\end{equation}
In Eqs.~\eqref{eq:cost-comp-1} and~\eqref{eq:cost-comp-2}, we have
introduced an auxiliary spin $\sigma_{0} \equiv 1$, which ensures that the cost
function  is quadratic in spin variables. Collecting these
contributions together, we obtain the spin formulation of the graph
coloring problem
\begin{equation}\label{eq:color_spin}
  \boldsymbol{\sigma}' = \argmin_{\boldsymbol{\sigma}} \left( H^{(1)}(\boldsymbol{\sigma}) + \lambda
  H^{(2)}(\boldsymbol{\sigma})\right).
\end{equation}
Equations~\eqref{eq:cost-comp-1},~\eqref{eq:cost-comp-2},
and~\eqref{eq:color_spin} reveal an important structure of the formulation
of the $K$-coloring problem for the Ising machines. The coloring penalty,
$H^{(1)}(\boldsymbol{\sigma})$, after constructing the relaxation following the
procedure outlined in Section~\ref{sec:v2-model}, yields the dynamics of
spins on graph $\mathcal{G} \times \mathcal{K}_K$, a tensor product of the original graph
$\mathcal{G}$ and the complete graph $\mathcal{K}_K$ describing the one-hot encoding
constraint, with an added apex, a node connected to all other nodes. In turn,
$H^{(2)}(\boldsymbol{\sigma})$ produces the dynamics resulting in the emergence of valid
colors on $N$ disconnected complete graphs $\mathcal{K}_K$ with apexes.

For writing the equations of motion governing the $V_2$ machine, it is
convenient to introduce a homogeneous enumeration of the variables across
the auxiliary spin and the tensor product of graph nodes and color encoding
sets using $\alpha, \beta \in \left\{ 0, 1, \ldots, NK \right\}$ corresponding to
$\left\{ 0, (1,1), (1, 2), \ldots , (N, K) \right\}$. Using this enumeration,
the penalty terms $H^{(k)}(\boldsymbol{\sigma})$, with $k = 1, 2$, can be written as
quadratic forms,
$H^{(k)}(\boldsymbol{\sigma}) = \boldsymbol{\sigma}^T \widehat{A}^{(k)} \boldsymbol{\sigma}$,
with the adjacency matrices
\begin{equation}\label{eq:k-color_adj-mats}
  \widehat{A}^{(1)} = \mqty(0 & K \mathbf{e}_{NK} \\
  K \mathbf{e}^T_{NK} & \widehat{A}_{\mathcal{G}} \times \widehat{I}_K),
  \qquad
  \widehat{A}^{(2)} = \mqty(0 & (K - 2) \mathbf{e}_{NK} \\
  (K - 2) \mathbf{e}^T_{NK} & \widehat{I}_{N} \times \widehat{A}_{\mathcal{K}}),  
\end{equation}
where $\mathbf{e}_{NK} = \left( 1, \ldots, 1 \right)$ is a vector with $NK$ unit
components, $\widehat{I}_M$ is the $M \times M$ identity matrix, $\widehat{A}_{\mathcal{G}}$ and
$\widehat{A}_{\mathcal{K}}$ are the adjacency matrices of graphs $\mathcal{G}$ and $\mathcal{K}_K$,
respectively, and $\widehat{A} \times \widehat{B}$ denotes the Kronecker product
of matrices $\widehat{A}$ and $\widehat{B}$.

The respective relaxed forms of the penalty terms defining the dynamics of
the $V_2$ machine are $H^{(k)}_V(\boldsymbol{\sigma}, \mathbf{X}) =
H^{(k)}(\boldsymbol{\sigma}) + \widetilde{H}^{(k)}_V(\boldsymbol{\sigma},
\mathbf{X})$, where
\begin{equation}\label{eq:V_H_adds}
  \widetilde{H}^{(k)}_V(\boldsymbol{\sigma}, \mathbf{X}) =
  \frac{1}{4} \sum_{\alpha,\beta} A^{(k)}_{\alpha, \beta} \sigma_\alpha \sigma_\beta \abs{X_\alpha - X_\beta}.
\end{equation}

The separated dynamics structure is
reflected by the respective equations of motion 
%that can be written in the $(\sigma, X)$ coordinates as
\begin{equation}\label{eq:k-color_eqm}
  \dot{\mathbf{X}} = -\nabla_{\mathbf{X}} \widetilde{H}^{(1)}_V(\boldsymbol{\sigma}, \mathbf{X})
  - \lambda \nabla_{\mathbf{X}} \widetilde{H}^{(2)}_V(\boldsymbol{\sigma}, \mathbf{X}),
\end{equation}
where the negative sign denotes that the machine evolves towards the
minimum of the penalty, and the coordinate of the auxiliary spin is kept
fixed $\dot{X}_0 \equiv 0$.

\subsection{Emergence of definite color states}
%\subsection{Emergence of high-spin states}

% \note{Make a decision about enforcing the relation with high-spin
%   states (and then mention the Potts model) or drop high-spins altogether}

The separated form of equations of motion~\eqref{eq:k-color_eqm} allows us
to pose the problem of the emergence of the states corresponding to definite colors. % within
% on the background \note{a different word needed here}of 
%the dynamics progressing toward the
%graph $K$-coloring posed as an optimization problem. % We analyze this
% emergence by describing all equilibrium states of spin complexes
% representing the color in the $V_2$ model driven by
% $H_2(\boldsymbol{\sigma})$ and showing that only states corresponding to
% definite colors are stable. Next, we show that the definite color states
% are separated from other states by an activation gap and, therefore, are
% also stable with respect to external perturbations, for example, induced by
% $H_1(\boldsymbol{\sigma})$, of sufficiently small magnitude. From this, we
% obtain an estimate of for the Lagrange multiplier in
% Eqs.~\eqref{eq:color_spin} and~\eqref{eq:k-color_eqm} ensuring the
% emergence of valid color states.
The dynamics of individual nodes toward the definite color states is
governed by $H^{(2)}_V(\boldsymbol{\sigma}, \mathbf{X})$, which is independent for different
nodes. Therefore, to investigate the emergence of definite colors, it
suffices to consider this dynamics for graph $\mathcal{K}_{K+1}$ corresponding to
coloring of a one-node graph. In this case, the dominant factor determining
the dynamics of the spin variables in the spin complex is their interaction
with the auxiliary spin owing to the relatively large weight of the
respective coupling. Considering the dynamics represented on a circle with
circumference $2$, generally, ``free'' spins co-oriented with the auxiliary
spin are repelled from it, while counter-oriented ones are attracted, as
illustrated by Fig.~\ref{fig:start-end-spins-labeled}.

\begin{figure}[tb]
  \centering
  \includegraphics[width=3.2in]{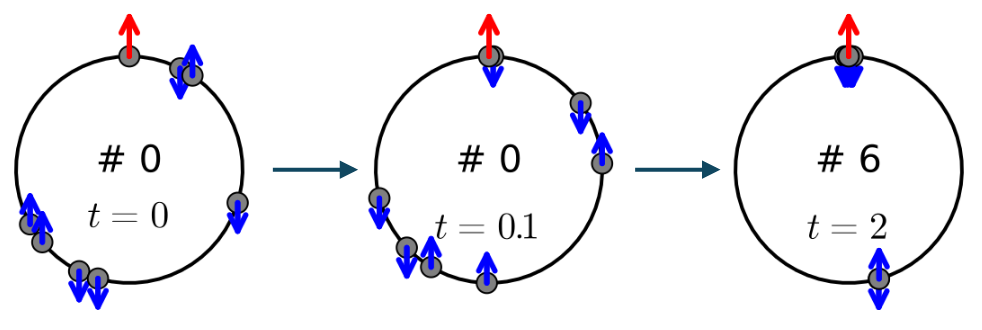}
  \caption{An example of the progression of a spin complex representing
    $K = 7$ colors from an initial generic state to a state with definite
    color. The stationary red vertical arrow shows the auxiliary spin. The
    label at the center shows the identified color ($0$ corresponds to the
    state that does not represent a valid color). Shown time is measured in
    units of the equations of motion [Eq.~\eqref{eq:k-color_eqm}].}
  \label{fig:start-end-spins-labeled}
\end{figure}

% Moreover, translating a definite color state
% $\left( \boldsymbol{\sigma}_0, \mathbf{X}_0 \right)$ to a new \emph{equilibrium} state
% $\left( \boldsymbol{\sigma}', \mathbf{X}' \right)$, with
% $\boldsymbol{\sigma}'$ not corresponding to a definite color, incurs a penalty
% $\Delta H^{(2)}_V (\boldsymbol{\sigma}' , \mathbf{X}';\boldsymbol{\sigma}_0 , \mathbf{X}_0)
% = H^{(2)}_V (\boldsymbol{\sigma}' , \mathbf{X}') - H^{(2)}_V (\boldsymbol{\sigma}_0 ,
% \mathbf{X}_0)$. The minimum value of this penalty defines an
% \emph{activation gap} protecting the definite color state from external
% perturbations,
% \begin{equation}\label{eq:activation_gap_def}
%  \lambda_c\left( \boldsymbol{\sigma}_0 , \mathbf{X}_0 \right) = \min_{\boldsymbol{\sigma}'
%   , \mathbf{X}'}
% \Delta H^{(2)}_V \left( \boldsymbol{\sigma}' , \mathbf{X}';
%   \boldsymbol{\sigma}_0 , \mathbf{X}_0 \right).
% \end{equation}

\begin{theorem}\label{thm:stable_states}
  The only stable equilibrium states of the $V_2$ dynamics of the spin
  complexes representing a color are states with definite color.
\end{theorem}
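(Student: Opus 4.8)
The plan is to reduce the claim to the dynamics of a single spin complex and classify its equilibria by one integer. Since the definite-color dynamics is generated entirely by $H^{(2)}_V$, which acts independently on each node (the shared apex enters only as the dynamically fixed field $\sigma_0\equiv1$), I would restrict attention to a single complex, i.e.\ to the weighted graph $\mathcal{K}_{K+1}$ formed by the $K$ color spins, mutually coupled with unit weight through $\widehat{A}_{\mathcal{K}}$, together with the apex coupled to every color spin with weight $K-2$, as encoded by $\widehat{A}^{(2)}$ in~\eqref{eq:k-color_adj-mats}. Writing $m$ for the number of color spins co-oriented with the apex, a direct evaluation of~\eqref{eq:cost-comp-2} gives $H^{(2)}(\boldsymbol{\sigma}) = 4(m-1)^2 + \mathrm{const}$, so the discrete penalty is a function of $m$ alone and attains its unique minimum at $m=1$, that is, precisely on the definite-color (one-hot) states. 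I would then invoke the general $V_2$ property recalled above: along the flow~\eqref{eq:k-color_eqm} the relaxed cut on $\mathcal{K}_{K+1}$ does not decrease, equivalently $H^{(2)}(\boldsymbol{\sigma}(t))$ does not increase in time. This monotonicity is the main structural tool.

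Stability then reduces to a local analysis at each sector boundary, i.e.\ where some $X_\alpha$ reaches $\pm1$ and $\sigma_\alpha$ flips (Fig.~\ref{fig:sigXspace}). In the circle representation the $m=1$ equilibrium consists of two antipodal clusters: the apex together with the single up-color on one side, and the $K-1$ down-colors on the opposite side. I would verify that this is a strict local optimum by testing the admissible perturbation modes separately---displacing the up-color, displacing a single down-color, and splitting the down-cluster---and checking that each strictly decreases the relaxed cut, so that the ascent flow restores the configuration and no $X_\alpha$ crosses a boundary. Each such check reduces to an elementary comparison between the apex coupling $K-2$ and the aggregate color coupling acting on the perturbed spin.

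For $m\neq1$ I would instead exhibit an escape direction, the decisive bookkeeping being as follows. A down-color displaced toward its flip boundary is held in place by the apex (coupling $K-2$) together with whatever up-colors are present (unit coupling each), and is pushed across by the remaining co-clustered down-colors. When $m=1$ the lone up-color supplies exactly one extra unit of restoring coupling, so the spin is held with strength $K-1$ against the $K-2$ pushing down-colors and is restored; when $m=0$ that unit is absent, the holding strength drops to $K-2$ against $K-1$ pushing down-colors, and the spin is driven across its boundary, flipping it and raising $m$ to $1$. The symmetric accounting on an up-color shows that for $m\ge2$ an up-color is under-held and is likewise driven across, lowering $m$ toward $1$. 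In every case $m\neq1$ produces a boundary crossing that strictly decreases $H^{(2)}(\boldsymbol{\sigma})$; since this quantity cannot subsequently increase, the trajectory never returns, and the state is unstable. Combined with the previous paragraph, the stable equilibria are then exactly the definite-color states.

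I expect the genuinely delicate step to be the instability argument in full generality. The two-cluster configurations above are the symmetric equilibria, but one must show that \emph{every} equilibrium with $m\neq1$---including asymmetric ones and the degenerate configurations in which several spins coincide or sit exactly at a flip boundary---admits such an escape direction. This forces a careful treatment of the set-valued right-hand side of~\eqref{eq:k-color_eqm} at the nonsmooth loci, where $\sgn$ is multivalued, together with a uniform version of the coupling comparison that does not rely on the special antipodal geometry. Verifying that the ``extra unit'' contributed by the up-colors controls the balance for all admissible positions, and not merely at the symmetric clusters, is the crux on which the proof turns.
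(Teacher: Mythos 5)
Your discrete counting is correct and is in fact the same computation the paper performs: writing the penalty as a function of the number $m$ of up-colors, $H^{(2)}=4(m-1)^2+\mathrm{const}$ is equivalent to the paper's cut formula $C_{K+1}(\boldsymbol{\sigma})=D(K-2)+UD$ with $U=m$, both uniquely optimized by the one-hot states. Where you diverge is in how this counting is converted into a statement about \emph{stable equilibria of the dynamics}. The paper does this in one step by invoking a general result of the $V_2$ model from~\cite{Erementchouk2023Self-containedMachines}: for any weighted graph, the stable equilibria are exactly the states whose binary component realizes the maximum cut. Given that, the counting finishes the proof. You instead set out to establish the dynamical characterization from scratch by a local analysis of perturbation modes and escape directions, and you yourself flag that the instability argument in full generality is unresolved. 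That unresolved step is precisely the content of the cited general theorem, so as written the proposal has a genuine gap rather than merely a different route.

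Two concrete reasons the local analysis as sketched would not close the gap. First, your escape mechanism for $m\neq1$ is a boundary crossing: a spin is ``driven across'' $X=\pm1$ and flips. But the problematic non-optimal equilibria of this system are not configurations with spins poised at flip boundaries; they include the fully clustered state with all $X_\alpha=0$ (which is an equilibrium for \emph{every} $\boldsymbol{\sigma}$, since the right-hand side of the equations of motion vanishes identically there) and, as Theorem~\ref{thm:equilibria} later shows, order-$3$ charged states. Their instability manifests as a charge imbalance that makes clusters drift apart at constant speed after an arbitrarily small perturbation --- the trajectory leaves every neighborhood long before any $X_\alpha$ reaches $\pm1$ --- so an argument organized around sign flips and the monotonicity of $H^{(2)}(\boldsymbol{\sigma})$ does not detect it. Second, your stability claim that the flow ``restores the configuration'' at $m=1$ is too strong: the definite-color equilibria form a continuum (the two-node cluster $\{\beta,\beta'\}$ can sit at any $X\in[-1,1)$), so perturbations along this family are not restored, and the correct statement is Lyapunov-type stability of the set, which again is what the cited general result supplies. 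To repair the proposal you would either need to reprove that general characterization (equilibria are zero-external-charge clustered states, and charged or non-max-cut ones are unstable under the set-valued $\sgn$ dynamics), or simply cite it as the paper does and let the $m$-counting do the rest.
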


\begin{proof}
  The theorem statement follows from the general property of the $V_2$
  model established in~\cite{Erementchouk2023Self-containedMachines}: for
  any weighted graph $\mathcal{G}$, only states
  $(\boldsymbol{\sigma}, \mathbf{X})$ with $\boldsymbol{\sigma}$ yielding the maximum
  cut partition of $\mathcal{G}$ are stable equilibria.

  In turn, inspecting the definition of $H^{(2)}(\boldsymbol{\sigma})$, one can
  see that the definite color states correspond to the maximum cut of the
  graph with the adjacency matrix given by $\widehat{A}^{(2)}$. It is,
  however, constructive to show this directly. We notice that graph
  $\mathcal{K}_{K+1}$ described by the adjacency matrix $\widehat{A}^{(2)}$ is
  obtained from $\mathcal{K}_K$ with edges of unit weight by adding an apex (the
  auxiliary spin) connected
  with the $\mathcal{K}_K$ by edges with weight $K-2$. Thus, graph
  $\mathcal{K}_{K+1}$ is symmetric with respect to permutations of the nodes of the
  ``underlying'' graph $\mathcal{K}_K$. Therefore, the weight of the cut of graph $\mathcal{K}_{K+1}$ is determined only by the number of positive and negative spins
  on graph $\mathcal{K}_K$.

  Let the part of spin configuration $\boldsymbol{\sigma}$ have of
  $\mathcal{K}_K$ spins $U \geq 0$ and $D \geq 0$ number of spins equal to
  $+1$ and $-1$, respectively ($U + D = K$). Then, the cut of graph   $\mathcal{K}_{K+1}$ produced by $\boldsymbol{\sigma}$ has the weight
  \begin{equation}\label{eq:cut_Kp1}
    C_{K+1}(\boldsymbol{\sigma}) = D(K - 2) + UD,
  \end{equation}
  where the first term is the weight of cut apex edges, and the second term
  is the weight of cut of the $\mathcal{K}_{K}$ edges. It is straightforward to check
  that $C_{K+1}(\boldsymbol{\sigma})$ reaches maximum
  $\overline{C}_{K+1} = (K - 1)^2$ at $U = 1$, that is a definite color
  state.
\end{proof}

% For a detailed analysis of the dynamics, we, first, notice that graph
% $\mathcal{K}_{K+1}$ described by the adjacency matrix $\widehat{A}^{(2)}$ is obtained
% from $\mathcal{K}_K$ with edges of unit weight by adding one node and connecting it
% with the $\mathcal{K}_K$ by edges with weight $K-2$. Thus, graph
% $\mathcal{K}_{K+1}$ is symmetric with respect to permutations of the nodes of the
% ``underlying'' graph $\mathcal{K}_K$.

Identifying stable equilibria is not sufficient for describing the terminal
states of the $V_2$ machine, as, starting from a generic initial state,
the machine may terminate in a saddle critical point of $H^{(2)}_V
(\boldsymbol{\sigma} , \mathbf{X})$. To explain the high probability of
convergence demonstrated in Fig.~\ref{fig:start-end-spins-labeled}, we need
more refined description of the equilibria.

We take into account that, as shown
in~\cite{Erementchouk2023Self-containedMachines}, equilibria of the $V_2$-machine on any weighted graph
$\mathcal{G}_W = \left\{ \set{V}_W, \set{E}_W \right\}$ have the form of strong
clustered states $\left( \boldsymbol{\sigma}, \mathbf{X} \right) $, with the set
of nodes $\set{V}_W$ partitioned into $R(\boldsymbol{\sigma})$ (the order of the
state) disjoint subsets according to $\set{V}_W = \bigcup_p \set{V}^{(p)}$, where
$\set{V}^{(p)} = \left\{ v \in \set{V}_W : X_u = X^{(p)} \right\}$ and
$1 \leq p \leq R(\boldsymbol{\sigma})$. As was shown
in~\cite{Erementchouk2023Self-containedMachines}, the state order is
bounded from above by the degeneracy of the binary configuration
$\boldsymbol{\sigma}$ defined as the number of graph partitions with the same
cut weight as produced by
$\boldsymbol{\sigma}$.% , the binary component of the terminal state

%\note{Cluster displacement?}

To consider perturbed equilibria, we need to
extend the notion of clustered states beyond strong clusters. To this end,
we assume that the set $\left\{ X^{(p)} \right\}$ characterizing the
equilibrium state is covered by nonoverlapping \emph{open} intervals
$\left(X^{(p)}_-, X^{(p)}_+\right)$ such that the set of nodes is
partitioned between
$\set{V}^{(p)} = \left\{ v \in \set{V}_W : X_u \in \left( X^{(p)}_-, X^{(p)}_+
  \right) \right\}$ and all $\set{V}^{(p)}$ are non-empty.

Due to the symmetry discussed in the proof of
Theorem~\ref{thm:stable_states}, the matrix elements of $\widehat{A}^{(2)}$
can be presented for $\alpha \ne 0$ as $A^{(2)}_{\alpha,\beta} = q_\beta$ with $q_0 = K - 2$
and $q_\beta = 1$ for $\beta > 0$. Using this
representation and the weak cluster partitioning, the equations of motion
can be written in a form
\begin{equation}\label{eq:one-node-h_2-dyn}
  \dot{X}_\alpha = \frac{1}{2}\sum_{\beta \in \set{V}^{(p)}} q_{\beta} \sigma_\alpha \sigma_\beta \sgn \left( X_\alpha - X_\beta \right) 
  + \frac{\sigma_\alpha \overline{Q}^{(p)}}{2},
\end{equation}
where we have introduced
$\overline{Q}^{(p)} = \sum_{q < p} Q^{(q)} - \sum_{q > p} Q^{(q)}$, a charge
external to cluster $\set{V}^{(p)}$ determined by the charge associated
with the clusters
\begin{equation}\label{eq:cluster-charge}
  Q^{(q)} = \sum_{\beta \in \set{V}^{(q)}} q_\beta \sigma_\beta.
\end{equation}
We note that, as follows from Theorem~\ref{thm:stable_states},
the only stable equilibria of spin complexes representing colors are the
equilibria with clusters of zero total charge.

It is evident that the necessary condition for the (strong) clustered state
characterized by $\left\{ X^{(p)} \right\}$ to be an equilibrium is for the
external charge to vanish: $\overline{Q}^{(p)} = 0$, for all $p$. This gives a simple criterion for describing all equilibrium states of spin complexes
representing a color.

\begin{theorem}\label{thm:equilibria}
  Equilibrium states of the $V_2$-machine driven by
  $H^{(2)}_V(\boldsymbol{\sigma}, \mathbf{X})$ have orders $1 \leq R \leq 3$. 
\end{theorem}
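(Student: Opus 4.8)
The plan is to take the equilibrium criterion established just above the statement at face value—namely that a clustered state $(\boldsymbol{\sigma}, \mathbf{X})$ of order $R$ can be an equilibrium only if the external charge vanishes in every cluster, $\overline{Q}^{(p)} = 0$ for $1 \le p \le R$—and convert it into an arithmetic constraint on the cluster charges $Q^{(p)}$. Subtracting the equilibrium conditions of two adjacent clusters, I would first establish the telescoping identity
\begin{equation}
  \overline{Q}^{(p+1)} - \overline{Q}^{(p)} = Q^{(p)} + Q^{(p+1)},
\end{equation}
so that $\overline{Q}^{(p)} = \overline{Q}^{(p+1)} = 0$ forces $Q^{(p+1)} = -Q^{(p)}$. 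Hence the cluster charges of any equilibrium alternate, $Q^{(p)} = (-1)^{p-1} Q^{(1)}$, while $\overline{Q}^{(1)} = 0$ identifies $Q^{(1)}$ with the total charge $T = \sum_p Q^{(p)}$. For even $R$ the alternating sum vanishes, which forces $T = 0$ and therefore $Q^{(p)} = 0$ for all $p$; for odd $R$ the charges are $\pm T$ in alternation.

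Next I would express $T$ through the spin content of $\mathcal{K}_{K+1}$. Writing $U$ and $D = K - U$ for the numbers of $+1$ and $-1$ spins among the $\mathcal{K}_K$ nodes and recalling $q_0 = K-2$, $\sigma_0 = +1$, one obtains $T = (K-2) + (U - D) = 2(U-1)$. The crucial structural remark is that every \emph{non-apex} cluster draws its charge solely from color spins: a non-apex cluster of charge $+\abs{T}$ must contain at least $\abs{T}$ up-spins, one of charge $-\abs{T}$ at least $\abs{T}$ down-spins, and (when $T=0$) a charge-zero cluster is balanced and, being nonempty, still contains at least one up-spin. The apex cluster is the single exceptional cluster that can carry a large charge ``for free,'' and it is precisely this asymmetry that will drive the counting.

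The heart of the argument is then a counting contradiction for $R \ge 4$, organized by the parity of $R$ and the sign of $T$. For even $R$ all charges vanish, so $T = 0$ and $U = 1$; yet each of the $R-1 \ge 3$ non-apex clusters demands its own up-spin, which a single up-spin cannot supply. For odd $R \ge 5$ with $T > 0$ the pattern produces $(R+1)/2 \ge 3$ clusters of charge $+T$, of which at least $(R-1)/2 \ge 2$ are non-apex, forcing $U \ge 2T = 4(U-1)$ and hence $U=1$, contradicting $T>0$. The case $T < 0$ (so $U = 0$) is symmetric: the $\ge 2$ positive-charge clusters cannot all be the apex cluster, yet a non-apex positive cluster needs up-spins that do not exist; and $T = 0$ for odd $R$ reduces to the even-$R$ count. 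These exhaust $R \ge 4$, and since $R \ge 1$ is immediate, the bound $1 \le R \le 3$ follows.

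I expect the main obstacle to be the bookkeeping around the apex: it carries charge $K-2$, lies in exactly one cluster, and must be quarantined from the spin-counting inequalities, since otherwise one heavily charged cluster could impersonate several and the count would collapse. A secondary technical point worth stating carefully is that the external-charge formula presupposes a linear ordering $X^{(1)} < \cdots < X^{(R)}$ of the clusters on the circle of circumference $2$, so I would fix such an ordering at the outset and check that the alternation relation $Q^{(p+1)} = -Q^{(p)}$ does not depend on where the circle is cut.
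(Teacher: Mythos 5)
Your proposal is correct and follows essentially the same route as the paper: the telescoping identity $\overline{Q}^{(p+1)}-\overline{Q}^{(p)}=Q^{(p)}+Q^{(p+1)}$ yielding alternating cluster charges with the boundary condition $Q^{(1)}=Q^{(R)}=T$, followed by counting available up/down spins against the charges demanded of the non-apex clusters, with the auxiliary-spin cluster treated as the one exceptional carrier of large charge. The only difference is one of explicitness: where the paper invokes ``a simple counting argument,'' you spell out the parity-of-$R$ and sign-of-$T$ case analysis (including the useful identity $T=2(U-1)$), which is a faithful elaboration rather than a different method.
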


\begin{proof}
  The case $R = 1$, when $X_\alpha = 0$ for $\alpha = 0, \ldots, K$, is trivial as
  the right-hand-side of Eq.~\eqref{eq:one-node-h_2-dyn} vanishes
  identically. 

  For $R > 1$, the condition of vanishing external charges,
  $\overline{Q}^{(p)} = 0$, for all $p$, has the form of a system of
  homogeneous equations with respect to cluster charges, or, in other
  words, an equation for eigenvectors corresponding to the zero eigenvalue
  of an antisymmetric $R \times R$ matrix with all the elements below the main
  diagonal equal to $1$. Subtracting the $(p - 1)$-th row, with $p > 1$, of
  the matrix from the $p$-th row, we obtain the relation between the
  charges of successive clusters: $Q^{(p)} = (-1)^p Q^{(1)}$. Adding the
  first and the last rows yields the ``boundary condition''
  $Q^{(R)} = Q^{(1)}$.

  It follows then that for the even orders, the system with respect to
  cluster charges admits only trivial solutions, $Q^{(p)} = 0$. At the same
  time, the cluster containing the auxiliary spin may have zero charge only
  when it contains $K - 2$ spins with values $-1$. Thus, there may be only
  two clusters, and, therefore, the only equilibrium states of even order
  are with $R = 2$ with the same structure. All but two relaxed spin
  variables have (for $\alpha > 0$) $\sigma_\alpha = -1$ and
  $X_\alpha = 0$. The remaining two variables form a separate cluster with
  $\sigma_\beta = -\sigma_{\beta'} = 1$ and
  $X_\beta = X_{\beta'} \in [-1, 1)\setminus \left\{ 0 \right\}$. Index
  $\beta$ yields the color represented by the spin complex. Together with the
  case $X_\beta = 0$ (that is an $R = 1$ state), this gives an explicit
  construction of all stable equilibrium states from
  Theorem~\ref{thm:stable_states}.

  For equilibrium states of odd order with $R > 1$, we only need to
  consider non-trivial solutions to $\overline{Q}^{(p)} = 0$, for all $p$.
  The strongest constraint imposed on these solutions is that the magnitude
  of cluster charges must be equal to the charge of the cluster containing
  the auxiliary spin. A simple counting argument shows that this constraint
  may only be satisfied with no more than three clusters. This leaves only
  possible $R = 3$ for equilibrium states. All such states have the same
  structure: all but four relaxed spin variables have (for $\alpha > 0$)
  $\sigma_\alpha = -1$ and $X_\alpha = 0$ yielding the cluster with a charge equal to the
  magnitude of the total state charge, $\abs{\sum_{p} Q^{(p)}} = 2$. The
  remaining four variables are split between two clusters,
  $\left\{ \beta_1, \beta_1' \right\}$ and
  $\left\{ \beta_2, \beta_2' \right\}$, with
  $\sigma_{\beta_k} = \sigma_{\beta_{k'}}$ and
  $X_{\beta_k} = X_{\beta_{k'}}$. The signs of the spins depend on the mutual
  ordering between $0$, $X_{\beta_1}$ and $X_{\beta_2}$. Such states exist
  for problems with $K \geq 4$.
\end{proof}

The structure of charged equilibrium states (with $R = 1$ or $R = 3$) has a
characteristic feature: they contain high-charge ($Q > 1$) sub-clusters
with coinciding $X$ coordinates of the constituting variables. Considering
the dynamics of two close co-oriented relaxed spins using the weak cluster
partitioning and Eqs.~\eqref{eq:one-node-h_2-dyn}, one can see that such
sub-clusters cannot form dynamically and must be present in the
initial conditions. Therefore, it can be concluded that almost all generic
initial states terminate in a stable equilibrium state.

It must be noted, however, that these results hold for formal equations of
motion. In particular implementations employing various approximations for
solving the equations of motion, there may be specific factors impacting
the convergence of the machine. For example, in the present paper, in
particular, to obtain results presented in Fig.~\ref{fig:convergence}, we
have used the Euler approximation. It is exact outside of the hyperplanes
where two or more relaxed spin variables have the same $X$ coordinates.
However, when the system phase point traverses these hyperplanes, the Euler
approximation leads to various spurious oscillations. The oscillations
affecting the convergence to definite color states occur when two
co-oriented relaxed spins change signs within the same time step. It can be
seen that such a situation may occur when the two-node cluster in an $R = 2$
equilibrium state is formed within the interval $\Delta X = \Delta t$ containing the
$X = \pm1$ point, where $\Delta t$ is the approximation time-step. Since the
position of the two-node cluster is not restricted by general dynamical
properties, the probability of such an event can be estimated as
$\sim \Delta t/2$. In turn, due to the general constraint $\Delta t \ll 2$, the
probability of encountering such spurious states is small, which is
reflected in Fig.~\ref{fig:convergence}. There are different techniques
for breaking the formation of spurious oscillations near the $X = \pm 1$ point
while generally staying within the Euler approximation. A detailed analysis
of improved algorithms for solving equations of motion governing the
$V_2$-machine [Eq.~\eqref{eq:relaxed_X_eqm} with $\phi_V = \phi_R$] is beyond the
scope of the present paper and will be provided elsewhere.

The last that needs commenting on is how the results discussed above
manifest in the context of solving of the graph coloring problem. The
relation between the coloring optimization and the convergence to the
definite color in the equations of motion is governed by the Lagrange
multiplier. Generally, $\lambda$ needs to be carefully chosen so that each
component is not too low while the other is too high, creating an
unfavorable bias towards one of the requirements imposed on the machine
terminal state. It is, therefore, important that the transition from an
equilibrium to a non-equilibrium state creates a ``restoring force'' (the
variation of the contribution of the second term in
Eq.~\eqref{eq:k-color_eqm}) with the magnitude of at least $\lambda$. Hence, if
the magnitude of the perturbation in Eq.~\eqref{eq:k-color_eqm} due to the
coloring term does not exceed $\lambda$, the machine's behavior is governed by
Theorems~\ref{thm:stable_states} and~\ref{thm:equilibria}. Since for the
problems regarded here, the perturbation vanishes when the coloring is
found, taking $\lambda = 1$, as was done in our studies discussed below, is
sufficient. A more complete description of the strategies for choosing the
correct value of $\lambda$ is a subject of ongoing research.

\subsection{Example of a proper graph coloring using the $V_2$ model}

Next, we illustrate the solution of graph coloring problem using the $V_2$
model. Figures~\ref{fig:color_example_graphs}a
and~\ref{fig:color_example_graphs}b show the example graph $\mathcal{G}$ with
$\chi_{\mathcal{G}}=3$ chosen for the illustration and an example of its proper
coloring. As described above, the problem graph of
Fig.~\ref{fig:color_example_graphs}a is transformed into a max-cut problem
graph (or an Ising graph) that has the form of
$\mathcal{G} \times \mathcal{K}_3$ supplied with an apex as shown in Fig.~\ref{fig:color_example_graphs}c.
% As evident, the Ising graph retains the structure of the original graph,
% with each node of the problem graph is replaced by the complete graph
% $K_3$. More precisely, the resultant graph is obtained as the tensor
% product of the problem graph $\mathcal{G}$ and the complete graph
% $\mathcal{K}_K$: $\mathcal{G} \times \mathcal{K}_K$. 
Each node of the Ising graph is labeled as
$\sigma_{i,\kappa}$, where $i$ is the node index in the original graph and $\kappa$ is the
index of the color. The weight of the edge between nodes $\sigma_{i,\kappa}$ and
$\sigma_{j,\kappa}$
\begin{equation}
    w\left(\sigma_{i,\kappa},\sigma_{j,\kappa}\right) = 1,
\end{equation}
and between $\sigma_{i,\kappa}$ and $\sigma_{i,\mu}$
\begin{equation}
    w\left(\sigma_{i,\kappa},\sigma_{i,\mu}\right) = \lambda.
\end{equation}
The auxiliary node connects to all nodes of the Ising graph with a uniform weight
\begin{equation}
    w_{0, (i, \kappa)} = \lambda + 3.
\end{equation}

\begin{figure}
    \centering
    \includegraphics[width=0.8\linewidth]{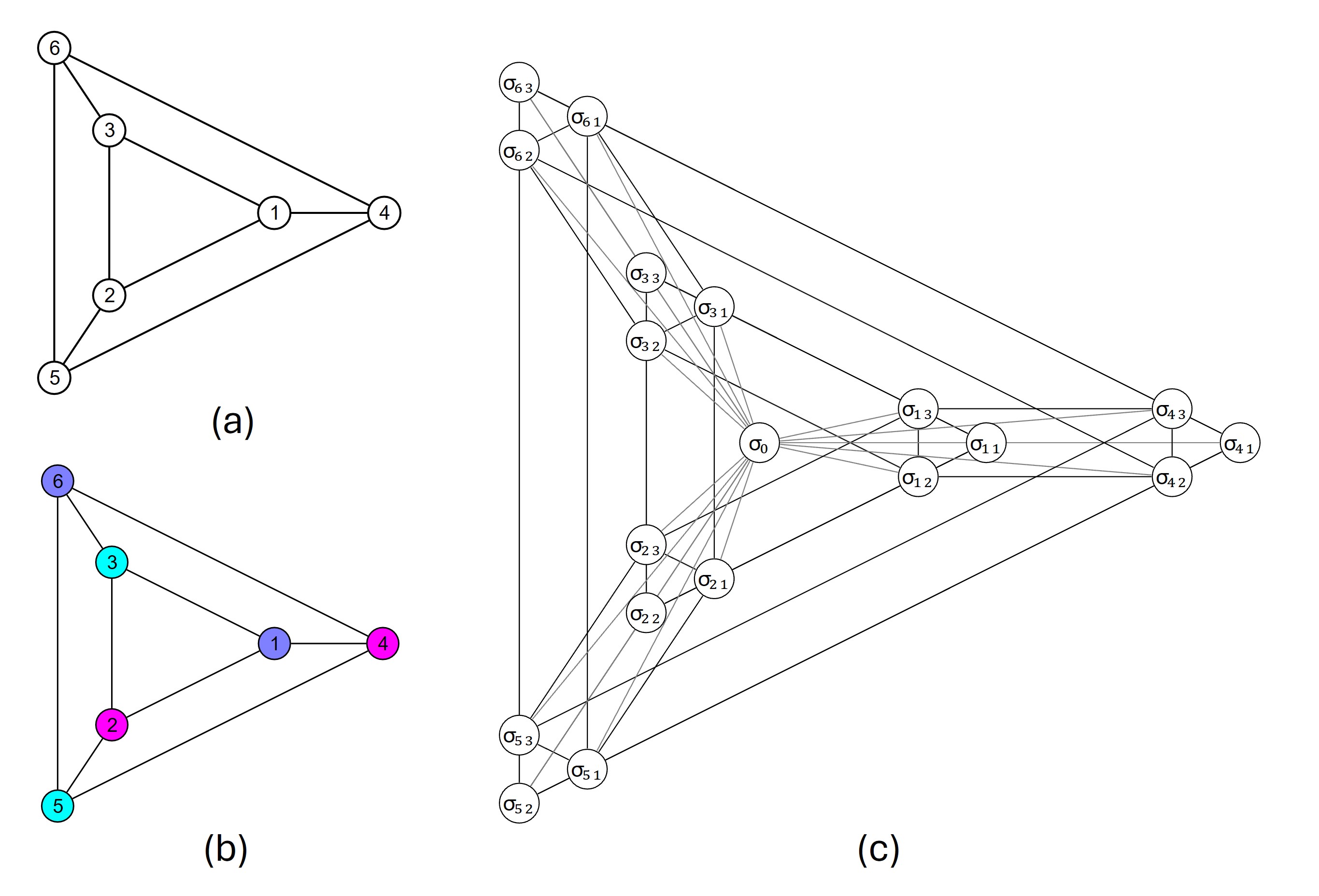}
    \caption{Example coloring problem. (a) 6-node problem graph; (b) possible coloring of the 6-node graph; (c) Ising model of the graph, the central node in the graph is the auxiliary node with the fixed spin.}
    \label{fig:color_example_graphs}
\end{figure}

Figure~\ref{fig:color_example_plots}a plots the evolution of
$\boldsymbol{\xi}$ with time and demonstrates that the $V_2$ machine solves
the coloring problem without converging to a binary state. As pointed out
before, this is in contrast with the existing machines such as
in~\cite{Inagaki2016, Goto2019}.%  As expounded
% in~\cite{Erementchouk2023Self-containedMachines}, for any initial
% condition, the dynamics terminates in the finite time since the rate of
% change of $X$ is a linear combination of elements from the set $\{\pm1\}$ for
% the chosen graph. For this reason, intervals where the rate of change is
% identical during that period can be identified, and the settling time of
% the system is finite. As the dynamics progresses, clustering of some of the
% $\xi$’s becomes apparent. Though the exact cause is beyond this paper's
% scope, it happens where at least two $\xi$ meet. After meeting, they enter a
% state of relative oscillations (an artifact of the adopted numerical method
% for solving the equations of motion), either at some quiescent value or at
% a common velocity relatively stepped up or down.

\begin{figure}
    \centering
    \includegraphics[width=1.0\linewidth]{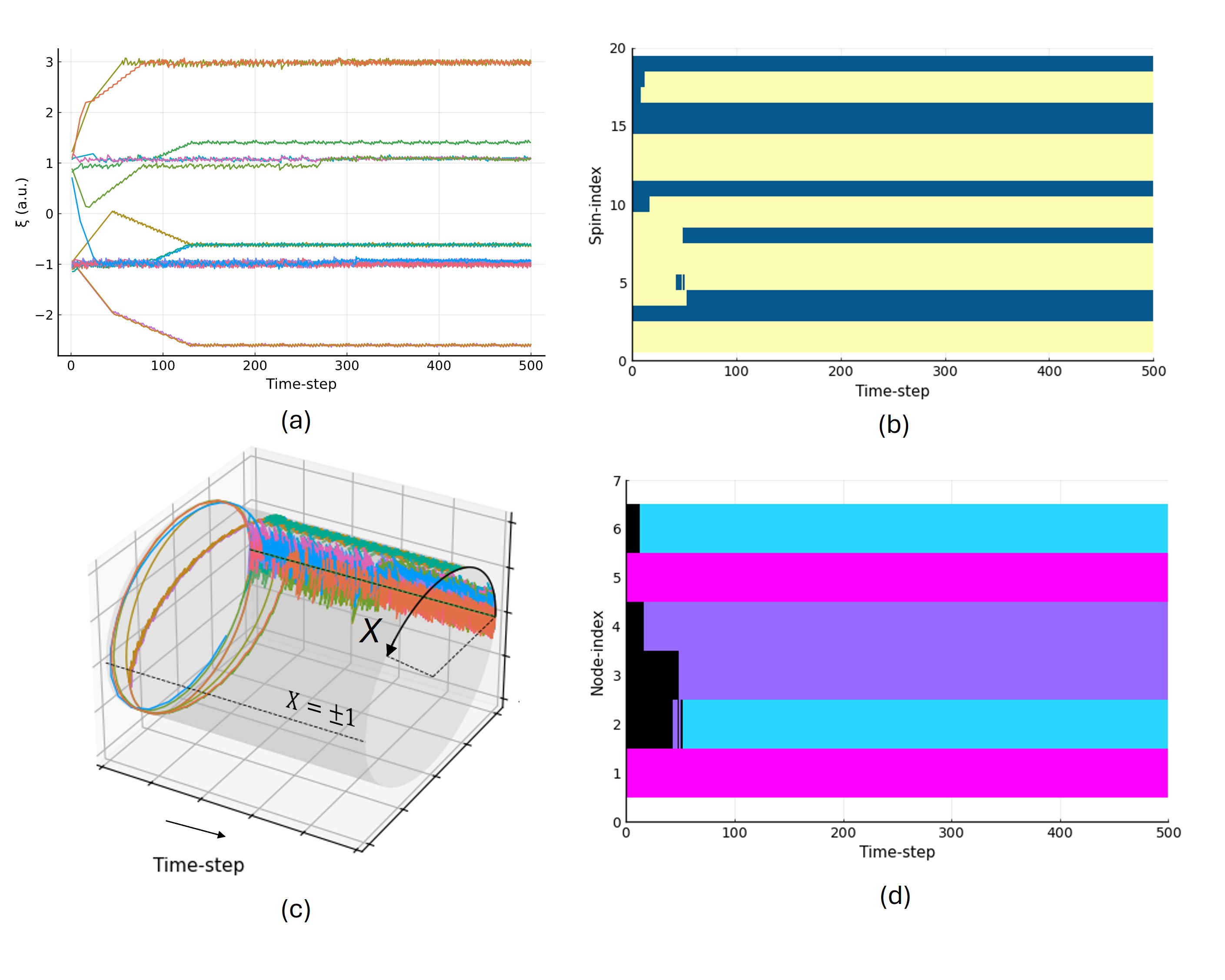}
    \caption{Dynamical progression for the example coloring problem. (a) $\xi$-evolution in $V_2$ model; (b) spin-evolution (green implying spins state of `1'); (c) evolution of the continuous component, $X$; (d) Color assignment based on the spins. Black regions indicate spin configurations that do not correspond to any color as they violate the one-hot color encoding constraint.}
    \label{fig:color_example_plots}
\end{figure}

The evolution of the relaxed spins (the $(\sigma, X)$ representation) is
depicted in Figs.~\ref{fig:color_example_plots}b
and~\ref{fig:color_example_plots}c, with the $X$ coordinates plotted on a
cylinder $[-1, 1) \times \text{time}$ with the apparent formation of the clusters
characteristic for equilibrium states of the $V_2$ dynamics. The consequence of the finite time-step of the employed Euler approximation is
the spurious oscillations near the positions of the (true) strong clusters,
similar to the oscillations discussed above.

% Figure~\ref{fig:color_example_plots}b shows the evolution of spins with time. For the problem, spins settle slightly before $\xi$'s settle. With the model, one may skip waiting for $\boldsymbol{\xi}$ (or $\boldsymbol{X}$) to settle to retrieve the optimal solution.
%
% The evolution of the continuous component $\boldsymbol{X}$ is plotted in Fig.~\ref{fig:color_example_plots}c. The trajectories placed over a cylindrical contour as the $X_i$'s are easier to interpret as angular variables.
% Its evolution is similar to $\boldsymbol{\xi}$ in that a steady state is reached at $T=200$,  and within finite intervals, the evolution is marked by a uniform rate of change. Also, it is characterized by formation of clusters and their homogeneous evolution.
% Unlike $\xi$, it is limited to the interval $[-1,1)$, as per definition. Due to this, the effect of finite rate-of-change becomes more apparent and clusters seem larger.

The evolution of color, derived from the spin evolution, is plotted in Fig.~\ref{fig:color_example_plots}d. Black strips indicate the absence of valid color assignment for that node.
Thus, we see that the $V_2$ model successfully solves the coloring problem of the selected graph by providing sufficiently good solutions to the corresponding max-cut problem, especially so that each vertex is assigned one (and only one) color.
In contrast, the coloring of slightly larger planar graphs on CIM was shown to require developing a special multi-stage procedure involving a regular adjustment of adjacency matrix~\cite{Inaba2022}.

\section{Latin squares and the Sudoku puzzle}

\begin{figure}
    \centering
    \includegraphics[width=1.0\linewidth]{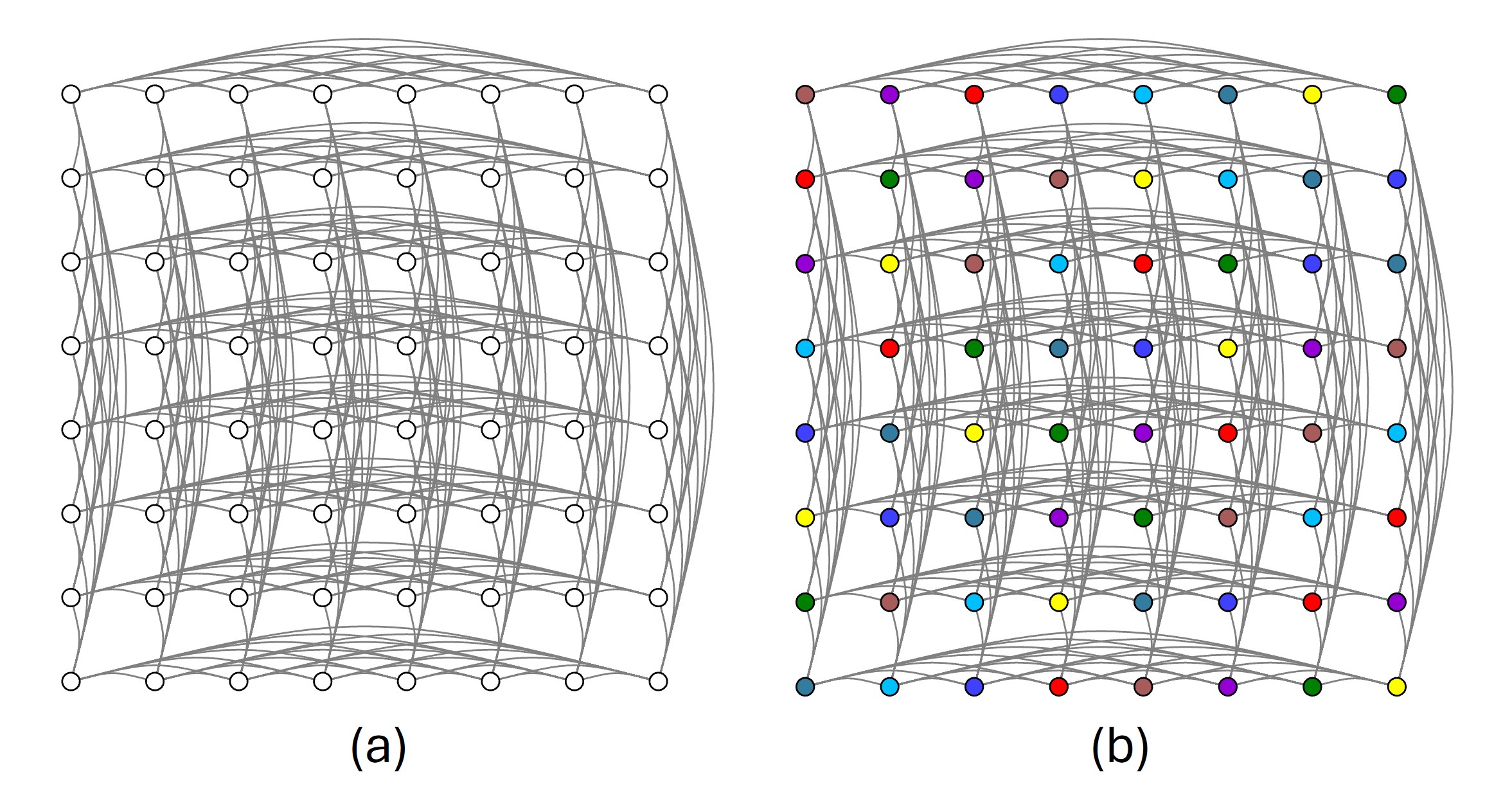}
    \caption{Latin square/rook's graph. (a) 8$\times$8 uncolored; (b) 8$\times$8 colored graph}
    \label{fig:latinsq_graph}
\end{figure}

% In the remainder of the text we apply the proposed dynamics to solve two related complex combintorial problems - latin square generation and solving Sudoku puzzles. 

A Latin square is an $N\times N$ (square) array of $N$ distinct labels arranged so that each row and column of the square has all the $N$ labels exactly once, without repetitions or omissions. The number of such arrangements is known to grow super-exponentially with $N$~\cite{VanLint1992ACombinatorics}, although an expression allowing for an efficient evaluation of the number of Latin squares is unknown~\cite{Stones2010TheRectangles}.
Latin squares are primarily used in statistical designs (for example, maximizing soil utility in agriculture), error-correcting codes, and potentially, at larger scales, cryptography~\cite{2006HandbookDesigns}.

%Construction 

% Generally, databases of Latin squares are available for various sizes. Moreover, multiple polynomial time algorithms on construction of latin squares exist that are permutation based, meaning that they construct latin squares by permutating rows and columns of known standardized latin squares. 

% Construction using $V_2$ model

Constructing an $N\times N$ Latin square is equivalent to the $N$-coloring of a graph whose vertices correspond to cells of the square and edges connect vertices belonging to the same row or column, or the size-$N$ rook's graph (Fig. \ref{fig:color_latinsq2}a) \cite{Laskar1999ChessboardParameters}.
As the most basic test of the applicability of $V_2$ model in larger scale combinatorics, we apply it to the coloring of the $8\times8$ rook’s graph. 
 
\begin{figure}
    \centering
    \includegraphics[width=1.0\linewidth]{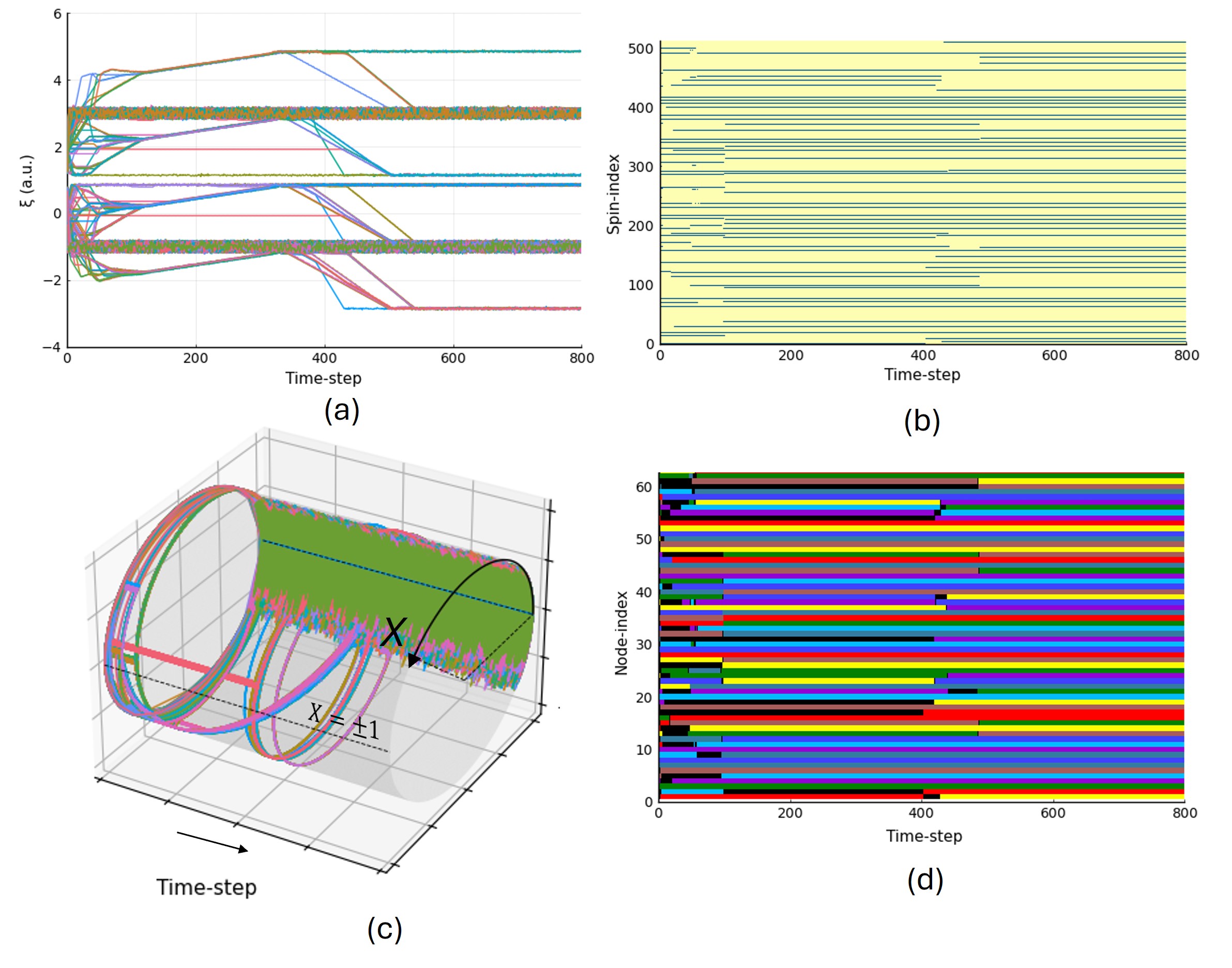}
    \caption{Dynamical progression for the coloring of rook's graph. (a) $\xi$-evolution in $V_2$ dynamics; (b) spin-evolution (green implying spins state of `1'); (c) $X$-evolution; (d) Color assignment based on the spins. Black indicates the intervals when the machine state does not yield a valid color.}
    \label{fig:color_latinsq2}
\end{figure}

Figure~\ref{fig:color_latinsq2}a shows the evolution of $\boldsymbol{\xi}$ for the Ising model of the rook’s graph using the $V_2$ dynamics when $\boldsymbol{\xi}$ is randomly initialized. 
%Features
Most features of its evolution are the same as for the example coloring problem.
In addition, we see an initial flux in the machine state, with relatively larger increments arising from a generally larger density of the graph. Throughout this flux, a well-defined, albeit suboptimal, binary state vector is always available. The evolution of $\boldsymbol{\xi}$ reaches a “noisy” steady state composed of clusters whose thickness depends on the time-discretization parameter and the typical degree of a node, as discussed above. 
%As explained before, these clusters are formed due to a finite rate of change of $\xi_i$ (or, $X_i$) and have vanishingly small thickness for infinitesimal time-steps.

Figures~\ref{fig:color_latinsq2}b and \ref{fig:color_latinsq2}c show the evolution of the spin state, $\boldsymbol{\sigma}(t)$ and the continuous component $\boldsymbol{X}(t)$, respectively.
Here, the period of $\boldsymbol{\xi}$ flux $\left(T<100\right)$ translates into the similar period of $\boldsymbol{X}$-flux where it evolves the fastest and actively traverses its range, and
terminates such that all $X$ settle in a cluster centered at $0$.

Progression of the color assignment from the $V_2$ model is shown in Fig. \ref{fig:color_latinsq2}d, where black regions correspond to the time intervals when the spin component $\boldsymbol{\sigma}$ violates the one-hot encoding constraint, that is when no valid color can be assigned. The result of the coloring of the rook’s graph is shown in Fig.~\ref{fig:latinsq_graph}b.

\begin{figure}
    \centering
    \includegraphics[width=1.0\linewidth]{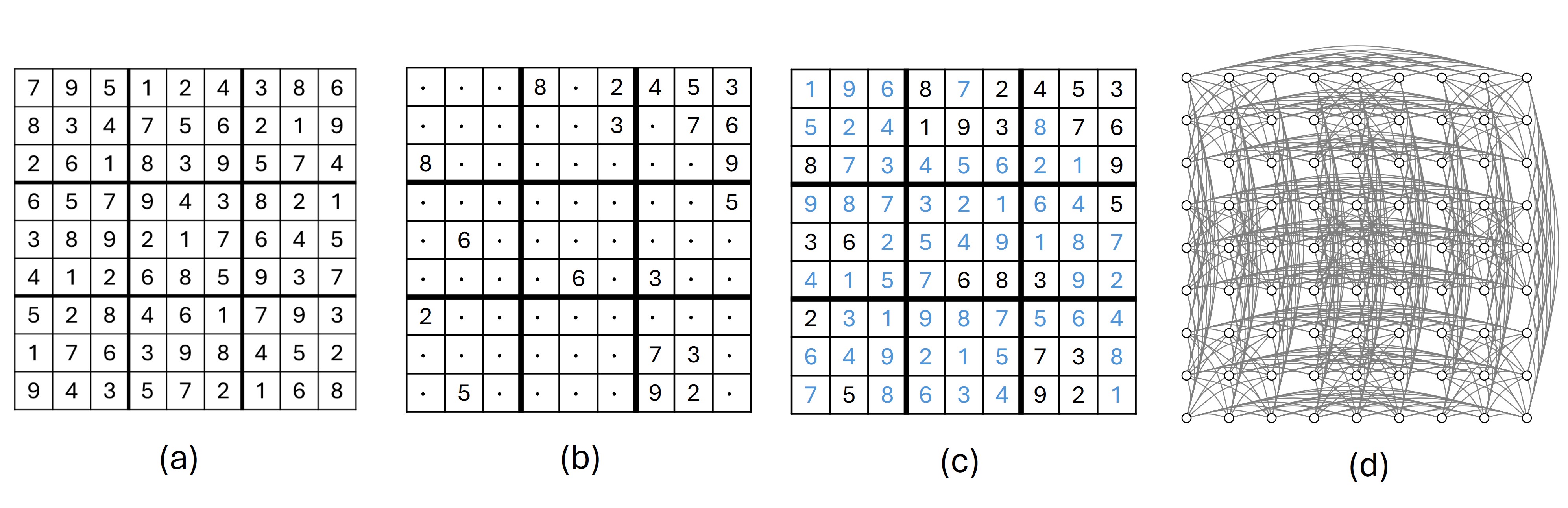}
    \caption{Elements of the Sudoku puzzle. (a) $9 \times 9$ Sudoku grid with a valid arrangement of the digits 1 to 9; (b) an example of the Sudoku puzzle; (c) solved Sudoku puzzle; (d) $81$-node Sudoku graph.}
    \label{fig:color_sudoku}
\end{figure}

Next, we apply the $V_2$ model to solve the popular combinatorics puzzle Sudoku.
This numeric puzzle can be regarded as set on a Latin square (Fig. \ref{fig:color_sudoku}a) but with additional constraints. Besides the prohibited repetitions of labels within individual rows and columns, the labels must be unique within predefined regions on the square. The canonical Sudoku puzzle is set on a $m^2 \times m^2$ grid with the predefined regions being $m \times m$ sub-squares. Finally, the puzzle is formulated by providing clues: pre-filling some cells with labels (Fig. \ref{fig:color_sudoku}b-c). For a $9\times9$ puzzle to be well-posed, that is to have a unique solution, at least 17 clues must be provided.
%
% sudoku latin square is an $n\times n$ latin square arrangement of the numbers $1,2,..., n^2$ (Fig. Xa). An additional constraint is defined over $n^2$ subsquares, each of size $n\times n$ under which no digit in the subsquare can be repeated. Sudoku puzzles are Sudoku latin squares with some entries from the square missing. The entries present are called clues and for a puzzle to be valid it must contain at least 17 clues.
%
% Both Sudoku latin squares and puzzles can be generated using  databases of existing squares and puzzles by permutation of rows within horizontal bands and of columns within vertical bands. Such methods retain difficulty of puzzles. Alternatively, puzzles may be generated from completed squares by random removal of elements, followed by testing for uniqueness of its solution.
%
Similar to how a Latin square was found in the previous section, the Sudoku grid can be filled by coloring the Sudoku graph depicted in Fig. \ref{fig:color_sudoku}d, whereas solving Sudoku puzzles requires coloring partially colored graphs.

% As a first for Ising machines, we use $V_2$ model as the colorizer of the Sudoku graph, both to generate Sudoku latin squares, as well as solve Sudoku puzzles.

 \begin{figure}
     \centering
     \includegraphics[width=1.0\linewidth]{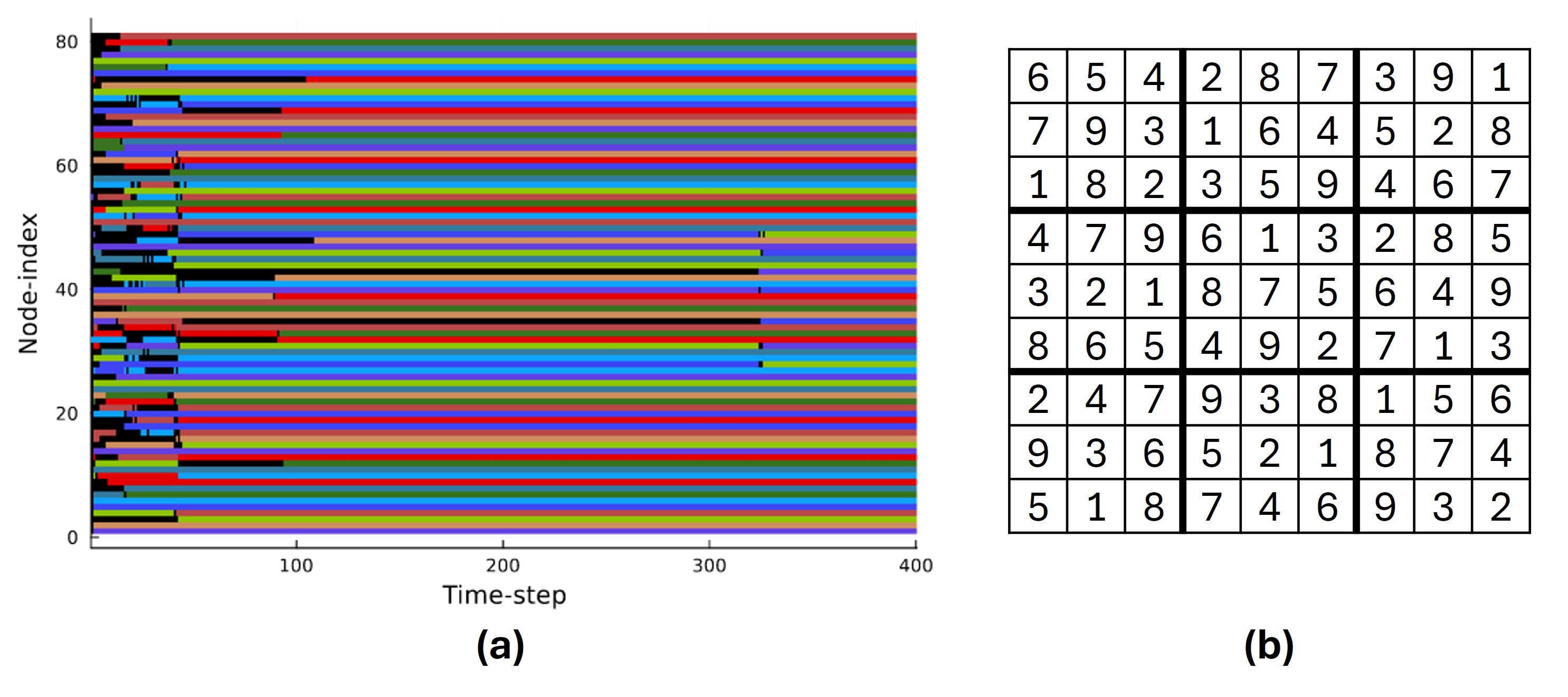}
     \caption{Filling up the Sudoku grid using the $V_2$-machine. (a) Evolution of color assigned to the individual cells; (b) The final state of the Sudoku grid.}
     \label{fig:sudoku_color_blank}
 \end{figure}
For $V_2$-machine programmed to fully color the graph in Fig. \ref{fig:color_sudoku}d, Fig. \ref{fig:sudoku_color_blank} shows the evolution of colors in the machine and Fig. \ref{fig:sudoku_color_blank}b shows the Sudoku grid at the end of the dynamics.
\begin{figure}
    \centering
    \includegraphics[width=1\linewidth]{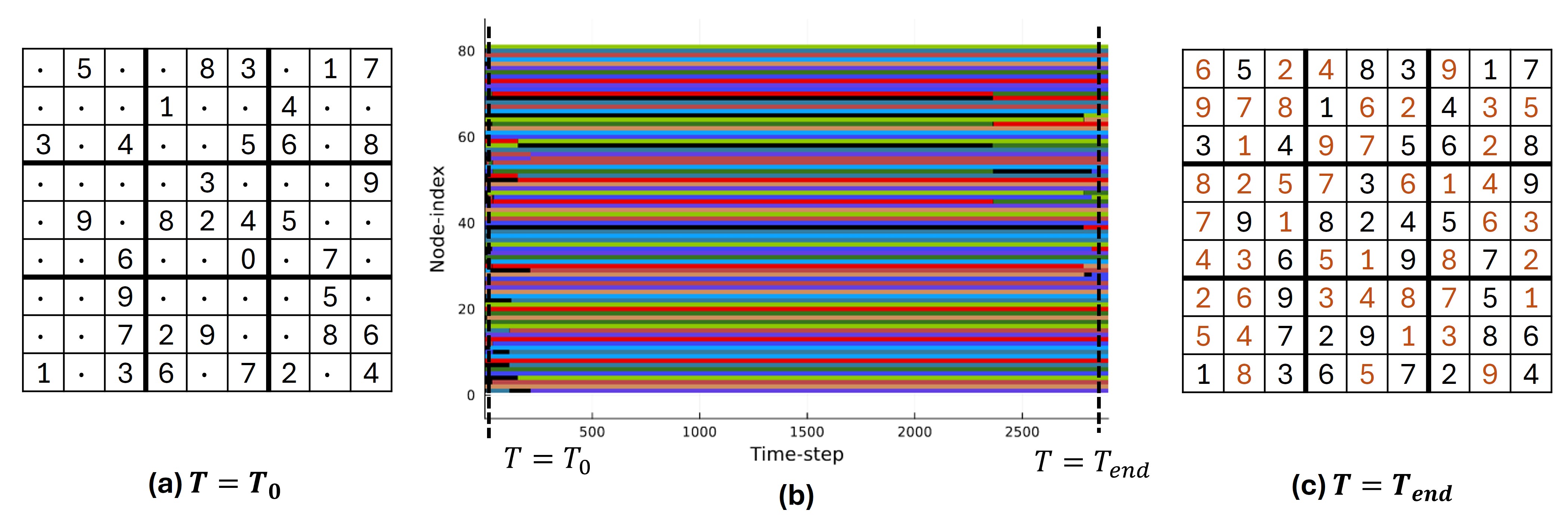}
    \caption{Solving Sudoku puzzle. (a) Sudoku puzzle. Coloring corresponding to the clues (at $T=T_0$) are held constant throughout the dynamics; (b) Color evolution; (c) Solution square as obtained from the coloring at $T=T_{end}$.}
    \label{fig:sudoku_color_puzzle}
\end{figure}
The same machine successfully solves the first 50 standard Sudoku Kaggle puzzles available at~\cite{Park2016}.
Figure~\ref{fig:sudoku_color_puzzle}a shows the puzzle we consider for illustration. 
Figure~\ref{fig:sudoku_color_puzzle}b shows the evolution of color, starting from a coloring corresponding to the puzzle in Fig.~\ref{fig:sudoku_color_puzzle}a, and
Fig.~\ref{fig:sudoku_color_puzzle}c shows the solution of the puzzle.

\section{Conclusion}

The growing interest in alternative models of computing has revived the Ising models as the basis for accelerated as well as scaled combinatorial optimization.
All NP-hard problems can be recast in terms of the problem of ground-state search of the Ising model and solved on realizations of the Ising model called Ising machines.

A new class of dynamical Ising machines, implemented on a variety of computing platforms, approach the search problem from the theoretical dynamical systems perspective by redefining the problem in terms of continuous spins and delivering the solution via its terminal state.
At the same time, the discrete nature of combinatorics puts tight constraints on the output (or the terminal state) of these dynamics. Since the Ising model is based on binary spins, variables of dynamical Ising machines must ultimately correspond to binary values.
Hence, the dynamical Ising machine, on the one hand, relies on the continuity of spins to be effective at navigating the energy landscape and, on the other, must terminate at states composed of binary spins.

 %The characteristic feature of the existing dynamical Ising machines is the imposition of bi-stability in the dynamics near its termination. Though straightforward and simple to implement in hardware, it may render the output solution inadequate for problems expecting a higher order structure from terminal state.

The $V_2$ model sets itself apart from the current dynamical Ising machines
in that it does not put forth such requirements and is
allowed to converge to a highly non-binary state. The relation between the
non-binary states of the $V_2$ machine and, by requirement, the binary form
of a sought solution is straightforwardly established by a
hybrid binary-continuous representation (the relaxed spin). The key
property of the $V_2$ model is that the spin configuration contained in the
binary component of the relaxed spin representation in the machine's terminal state is of the same quality as an optimal rounding of the
underlying non-binary state.

% Our hybrid dynamics has the state composed of continuous and binary components, with the latter being an ever-available solution to the problem. Moreover, it is straightforward to implement on existing electronic very-large scale integration technologies.

As applications of the Ising machines to realistic problems are being
gradually expanded, concerted research effort is needed to develop Ising
models of such problems and establish these machines' efficacy at solving a diverse set of foundational problems. We demonstrate the computational
capabilities of the non-binary dynamical Ising machine based on the $V_2$
model through the coloring of non-planar complex graphs, namely those
corresponding to Latin squares and the Sudoku puzzle. Importantly, we show
that the $V_2$ model inherently converges to states satisfying the
feasibility constraint (definite color). This property signifies the
applicability of the Ising machine based on the $V_2$ model to a broad
class of high-level optimization problems where the feasibility requirement
is critical for the sole mapping of the problem to an Ising machine.

% We show that the dynamics

% More importantly, the dynamics is able to satisfy higher order constraints set by the problem by converging to a valid coloring of the problem graph.

\bibliographystyle{naturemag}
\bibliography{references_AS,references_ME}

%Extra text

\end{document}